\documentclass[11pt]{article}

\usepackage[margin=1in]{geometry}
\usepackage{amsmath,amssymb,amsthm,mathtools}
\usepackage{bm}
\usepackage{enumitem}
\usepackage{booktabs}
\usepackage{array}
\usepackage{xcolor}
\usepackage[colorlinks=true, linkcolor=blue!70!black, citecolor=green!50!black, urlcolor=blue!70!black]{hyperref}
\usepackage[numbers,sort&compress]{natbib}
\usepackage{cleveref}

\numberwithin{equation}{section}

\newtheorem{theorem}{Theorem}[section]
\newtheorem{lemma}[theorem]{Lemma}

\theoremstyle{definition}
\newtheorem{definition}[theorem]{Definition}

\makeatletter
\theoremstyle{plain}
\newtheorem*{rep@theorem}{\rep@title}
\newcommand{\newreptheorem}[2]{%
\newenvironment{rep#1}[1]{%
 \def\rep@title{#2~\ref{##1}}%
 \begin{rep@theorem}}%
 {\end{rep@theorem}}}
\makeatother
\newreptheorem{theorem}{Theorem}

\newcommand{\R}{\mathbb{R}}
\newcommand{\Z}{\mathbb{Z}}
\newcommand{\E}{\mathbb{E}}
\let\Pr\relax
\DeclareMathOperator{\Pr}{Pr}
\newcommand{\I}{\mathbb{I}}
\newcommand{\OPT}{\mathrm{OPT}}
\newcommand{\STR}{\mathrm{STR}}
\newcommand{\FSD}{\succeq_{\mathrm{FSD}}}
\newcommand{\gain}{\mathrm{gain}}
\newcommand{\loss}{\mathrm{loss}}

\title{\vspace{-1.5cm}\bf The Power of Recruiting the Smaller Side: Two Additional Traders Suffice in Two-Sided Markets\thanks{The main result was entirely obtained by Cogentic, an agentic framework for mathematical discovery, using an internal version of Gemini as the base model. The authors contextualized the findings and verified the proofs. The full exposition here is due to the authors aided by different AI models.

The Cogentic harness was built by Yang Cai, Vineet Gupta, Yanchen Jiang, Christopher Liaw, Aranyak Mehta, Grigoris Velegkas, and Di Wang. The problem was proposed and the exposition was led by Yang Cai, Christopher Liaw, Aranyak Mehta, and Mingfei Zhao.

The following authors have additional affiliations beyond Google Research: Yang Cai (Yale University) and Vineet Gupta (Google DeepMind).}}

\author{
\parbox{0.95\textwidth}{\centering
Yang Cai,
Vineet Gupta,
Yanchen Jiang,
Christopher Liaw,\\[0.3em]
Aranyak Mehta,
Grigoris Velegkas,
Di~Wang,
Mingfei Zhao
}\\[1.2em]
\normalsize Google Research\\[0.3em]
\small\texttt{\{caiy, vineet, yanchenjiang, cvliaw, aranyak, gvelegkas, wadi, mingfei\}@google.com}
}
\date{}

\begin{document}

\maketitle

\begin{abstract}
We study Bulow-Klemperer-style competition complexity in two-sided double auctions with $m$ unit-demand buyers drawn i.i.d.\ from $F_B$ and $n$ unit-supply sellers drawn i.i.d.\ from $F_S$.
When $m \ge n$ and buyer valuations first-order stochastically dominate seller costs ($F_B \FSD F_S$), we prove that recruiting just \emph{two} additional sellers enables Seller Trade Reduction ($\STR$), a prior-independent mechanism, to achieve expected Gains From Trade (GFT) at least the first-best GFT of the original market.
When the buyer side is the smaller side of the market ($m\le n$), an analogous result holds for Buyer Trade Reduction with 2 additional buyers.
This resolves open questions of Babaioff, Goldner, and Gonczarowski~\cite{BabaioffGG20} and Cai, Liaw, Mehta, and Zhao~\cite{CaiLMZ24}. We complement our upper bound by showing that this uniform bound is optimal: already for $m = n = 1$, no prior-free mechanism (deterministic or randomized) that is dominant-strategy incentive-compatible, individually rational, and weakly budget-balanced can match the first-best GFT by recruiting only one additional seller.
\end{abstract}

\section{Introduction}\label{sec:intro}

Maximizing the Gains From Trade (GFT) in two-sided markets is a fundamental problem in mechanism design with widespread applications ranging from financial exchanges and the FCC Incentive Auction to online ride-sharing and lodging marketplaces. Unlike one-sided auctions—where the classic Vickrey auction~\cite{Vickrey61} achieves full ex-post efficiency (first-best welfare) and the Myerson auction~\cite{Myerson81} achieves optimal expected revenue—two-sided mechanism design must satisfy incentive compatibility and individual rationality on \emph{both} sides of the market while simultaneously maintaining budget balance (running no deficit). The seminal impossibility theorem of \citet{MyersonSatterthwaite83} demonstrates that these three desiderata are fundamentally in conflict: even in bilateral trade with a single buyer and a single seller ($m = n = 1$), no Bayesian incentive-compatible (BIC), individually rational (IR), and weakly budget-balanced (WBB) mechanism can achieve full efficiency. Furthermore, the second-best mechanism that maximizes expected GFT subject to BIC, IR, and WBB is intricate and depends heavily on the underlying value and cost distributions~\cite{MyersonSatterthwaite83}.

A central paradigm for bypassing the complexity and prior-dependence of optimal mechanisms is the \emph{competition complexity} framework pioneered by \citet{BulowKlemperer96}. In single-item one-sided auctions, \citet{BulowKlemperer96} showed that the expected revenue of a simple, prior-independent second-price auction with $n + 1$ i.i.d.\ regular bidders exceeds that of the Myerson optimal auction with $n$ bidders. Rather than tailoring a complex mechanism to the prior, a market designer can achieve the optimal benchmark simply by recruiting one additional participant and running a prior-independent mechanism.

\citet{BabaioffGG20} initiated the study of Bulow--Klemperer-style questions in two-sided double auction markets, where $m$ unit-demand buyers draw values i.i.d.\ from a distribution $F_B$ and $n$ unit-supply sellers draw costs i.i.d.\ from a distribution $F_S$. They studied two asymmetric variants of McAfee's prior-independent \emph{Trade Reduction} mechanism~\cite{McAfee92}—\emph{Buyer Trade Reduction} ($\mathrm{BTR}$) and \emph{Seller Trade Reduction} ($\STR$).
In the $\STR$ mechanism, we first compute the efficient trade size $r$ (i.e.~the size of the maximimum matching) then check if there exists $r+1$ sellers and that the $r$-th highest buyer bid is at least that of the $(r+1)$-th lowest seller bid.
If so then the top $r$ buyers and bottom $r$ sellers trade and pay (resp.~get paid) the bid of the $(r+1)$-th seller.
Otherwise, we \emph{reduce} the final trade and the remaining $r-1$ buyers pay the $r$-th buyer's bid and the remaining $r-1$ sellers get paid the $r$-th seller's bid.
The $\mathrm{BTR}$ is defined analogously.
When buyer and seller distributions are identical ($F_B = F_S$), \citet{BabaioffGG20} proved that recruiting just $1$ additional trader suffices for $\mathrm{BTR}$ (or $\STR$) to match the first-best Gains From Trade $\OPT(m, n)$ of the original market, with no regularity assumptions whatsoever. \citet{BabaioffGG20} then consider the setting where the buyer's distribution first-order stochastically dominates the seller's distribution ($F_B \FSD F_S$, i.e., $F_B(x) \le F_S(x)$ for all $x \in \R$). Focusing on augmenting the \emph{larger (abundant) side} of the market (e.g.~recruiting additional buyers when $m \ge n$), they showed a stark separation: even when $n = 1$, recruiting $\lfloor \log_2 m \rfloor$ additional buyers is insufficient for any anonymous, deterministic prior-independent mechanism, while $n(m + 4\sqrt{m})$ additional buyers suffice for $\mathrm{BTR}$. For bilateral trade ($m = n = 1$), they showed that $4$ additional traders suffice while $1$ additional trader is insufficient for any anonymous, deterministic prior-independent mechanism, leaving open the exact threshold in $\{2, 3, 4\}$.

Subsequently, \citet{CaiLMZ24} showed that the $\Omega(\log m)$ lower bound of~\cite{BabaioffGG20} stems from restricting recruitment to the abundant side of the market.
They proved that under $F_B \FSD F_S$ and $m \ge n$, recruiting a constant number of traders $c = O(1)$ on \emph{both} sides of the market enables the GFT from $\STR$ to achieve expected GFT at least the first-best GFT.
However, the constant appearing in their analysis is at least $20{,}000$. In the conclusion of their paper~\cite[Section~5]{CaiLMZ24}, they posed the natural open question of whether recruiting \emph{only on the smaller (scarce) side} of the market suffices: when $m \ge n$ buyers face $n$ sellers, how many additional sellers are needed to match the first-best GFT?

\subsection{Our Results}\label{subsec:results}

In this paper, we completely resolve this question and establish the exact competition complexity of recruiting on the scarce side of a two-sided market. When $m \ge n$ and $F_B \FSD F_S$, we prove that recruiting \emph{just two} additional sellers enables Seller Trade Reduction to achieve expected Gains From Trade at least the first-best optimal GFT of the original $(m, n)$ market.
For notation, we use $\STR(m, n)$ and $\OPT(m, n)$ to denote the expected GFT of STR and the first-best GFT in a market with $m$ buyers and $n$ sellers, respectively.

\begin{theorem}[Two Traders Suffice]\label{thm:main}
Let $m, n \in \Z_{\ge 1}$ with $m \ge n$. Let $F_B$ and $F_S$ be cumulative distribution functions on $\R$ such that $F_B$ first-order stochastically dominates $F_S$ ($F_B \FSD F_S$, i.e., $F_B(x) \le F_S(x)$ for all $x \in \R$). Then recruiting two additional sellers drawn independently from $F_S$ suffices for Seller Trade Reduction to match or exceed the first-best GFT of the original market:
\begin{equation}\label{eq:main-thm}
    \STR(m, n + 2) \ge \OPT(m, n).
\end{equation}
\end{theorem}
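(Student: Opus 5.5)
We compare the two sides of \eqref{eq:main-thm} through a coupling. Both markets share the same $m$ buyer values $b_{(1)}\ge\dots\ge b_{(m)}$. The market with $n+2$ sellers is obtained from the original one by adding two independent sellers from $F_S$. Write $\OPT(m,n+2)$ for its first-best GFT. Then
\[
\STR(m,n+2)=\OPT(m,n+2)-\E[\loss],
\]
where $\loss$ is the gain of the single trade that $\STR$ removes. This trade is removed only when the $(r+1)$-th lowest seller cost exceeds the $r$-th highest buyer value, where $r$ is the efficient trade size. It therefore suffices to show
\[
\E[\gain]\ge\E[\loss],\qquad \gain:=\OPT(m,n+2)-\OPT(m,n).
\]
The gain counts the extra surplus from two more sellers. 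These sellers either add new trades, using buyers left unmatched because $m\ge n$, or displace high-cost sellers in existing trades.

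\textbf{Reduction to quantiles.} By the standard quantile transform we may write $s=F_S^{-1}(u)$ with $u$ uniform. The dominance $F_B\FSD F_S$ then lets us assume the buyer values are $b=G(v)$, where $v$ is uniform and $G(v)\ge F_S^{-1}(v)$ pointwise. Both $\E[\gain]$ and $\E[\loss]$ can be written as integrals of order-statistic probabilities against the increments $ds$. I will write the first-best GFT as
\[
\int \E\bigl[\min\bigl(\#\{i:b_i\ge x\},\ \#\{j:s_j\le x\}\bigr)\bigr]\,dx
\]
minus the analogous term, i.e. the layer-cake formula for matching size at threshold $x$. Then the gain at level $x$ is
\[
\E\bigl[\min(N_B(x),N_S^{n+2}(x))-\min(N_B(x),N_S^{n}(x))\bigr],
\]
where $N_B(x)\sim\mathrm{Bin}(m,1-F_B(x))$ and $N_S^k(x)\sim\mathrm{Bin}(k,F_S(x))$. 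The loss, in turn, is the gap between the $(r+1)$-th seller cost and the $r$-th seller cost on the event that trade is reduced. I would rewrite this loss in the same layer-cake form: at level $x$ it is the probability that the reduced trade straddles $x$, that is, that the $r$-th seller lies below $x$ and the reduction is triggered.

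\textbf{Pointwise comparison.} The goal is to prove, for each fixed $x$ with $p=F_S(x)$ and $q=1-F_B(x)\ge 1-p$, that gain density $\ge$ loss density. This is a purely combinatorial inequality in binomial probabilities with parameters $m\ge n$, $q\ge 1-p$. The gain density is at least $\Pr[N_S^n(x)<N_B(x)]$ times the chance that an added seller is below $x$. More precisely, it is
\[
\Pr[N_S^{n}< N_B,\ \text{new seller}\le x]+\Pr[N_S^{n}+1< N_B,\ \text{both}\le x].
\]
The loss density is bounded by the probability that the efficient trade size equals exactly the count below $x$ and the pivotal seller is the last one below $x$. I expect to bound it by something like $\Pr[N_S^{n+2}(x)\le N_B(x)]\cdot(\text{prob. a specific seller is marginal})$. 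I would then use monotonicity in $q$ to reduce to the worst case $q=1-p$, where buyers and sellers are symmetric. At that point I hope to invoke an exchangeability argument akin to the $F_B=F_S$ case of \citet{BabaioffGG20}, and then use $m\ge n$ to lift the result to general $m$.

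\textbf{Main obstacle.} The hardest step is the pointwise binomial inequality. The loss is not determined by level-$x$ counts alone, because $r$ depends on the whole profile. Getting a loss bound that is local in $x$ and still sharp enough for the constant $2$ is delicate. Indeed, with one extra seller the inequality must fail, since the bound is tight for $m=n=1$. My first attempt is to handle the reduced trade via the event ``the $(r+1)$-th lowest seller is above $x$ and the $r$-th highest buyer is below it''. I would then exploit that the two extra sellers contribute a factor roughly $(n+2)p(1-p)$ versus $\binom{n+2}{2}$ terms. If this fails, I would fall back on an induction on $m$ (adding buyers only increases the slack) plus a direct check at $m=n$.
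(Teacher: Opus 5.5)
\textbf{There is a genuine gap.} You reduce the theorem to a level-$x$ inequality ``gain density $\ge$ loss density'' but never prove it, and the concrete ingredients you propose for it fail.

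\emph{First, the loss is misstated.} You initially (correctly) describe the loss as the surplus $b^{(r)}-s^{(r)}$ of the removed trade. You then identify it with the seller gap $s^{(r+1)}-s^{(r)}$ and drop the condition $b^{(r)}\ge x$ from its layer-cake density. The gap overstates the loss; it is even $+\infty$ when $r=n+2\le m$. With it, the pointwise comparison is already false for $F_B=F_S=\mathrm{Uniform}[0,1]$, $m=n=1$. As $x\to1$, the gain density is $(1-x)^2\bigl(1-(1-x)^2\bigr)\approx(1-x)^2$, while your loss density is $3\,\E_b\bigl[\min(b,x)\,(1-\max(b,x))^2\bigr]\approx\tfrac32(1-x)^2$. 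The true loss density $\Pr[s^{(r)}\le x\le b^{(r)}<s^{(r+1)}]$ is only $x(1-x)^3$.

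\emph{Second, the reduction to the symmetric case goes the wrong way.} The level-$x$ difference is not a function of $(m,n,p,q)$: the loss depends on $F_S$ evaluated at the buyer's value, not at $x$. Your reduction ``by monotonicity in $q$'' to the symmetric case $q=1-p$ fails. Take $m=n=1$. Let $F_S$ put mass $p$ below $x$ and none on $(x,x']$, put buyer mass $p$ on $(x,x']$, and let $F_B=F_S$ be continuous above $x'$. Then $F_B\FSD F_S$ and $q=1$. The exact level-$x$ difference is $p(1-p)(1-2p+2p^2)$, which is strictly below the symmetric value $p(1-p)^2$ whenever $p<1/2$.

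This is no accident. By \citet{BabaioffGG20}, one extra seller suffices when $F_B=F_S$, but by Theorem~\ref{thm:lower-bound} it does not under FSD. So the asymmetric case is genuinely harder, and an exchangeability argument for the symmetric case cannot carry the proof.

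\textbf{What is needed} is the following three steps.

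(i) Use the ex-post rank-by-rank form $\STR(\mathbf{b},\mathbf{s}_k)=\sum_{i}(b^{(i)}-s_k^{(i)})\,\I[b^{(i)}\ge s_k^{(i+1)}]$, and compare rank $i$ of $\STR(m,n+2)$ directly with rank $i$ of $\OPT(m,n)$, rather than going through $\OPT(m,n+2)$. At level $x$, the rank-$i$ gain is $p^iA_{n,i}(p)$. The rank-$i$ loss event is ``exactly $i$ sellers lie below $x$ and the other $n+2-i$ exceed $b^{(i)}$'', with conditional probability $\binom{n+2}{i}p^i\bigl(1-F_S(b^{(i)})\bigr)^{n+2-i}$. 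The global trade size $r$ never appears, which dissolves the obstacle you flag.

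(ii) Do not seek domination pointwise in the buyer; it fails for $b^{(i)}$ just above $x$ when $p$ is small. Instead, pass to the buyer quantile $w$, use FSD only in the form $1-F_S(b^{(i)})\le\min(1-p,w)$, and average against the $\mathrm{Beta}(i,m-i+1)$ density.

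(iii) The heart of the matter is the full-range bound $\Phi_{n,i}(p)=\int_0^1\psi_{n,i}(w;p)f_{i,n}(w)\,dw\ge0$. Set $d=n+2-i$ and $W\sim\mathrm{Beta}(i,n-i+1)$. The bound follows from the identity $A_{n,i}(1-z)=\binom{n+2}{i}\bigl[\tfrac{i}{n+2}z^d+\tfrac{d}{n+2}\E[W]z^{d-1}\bigr]$ together with $\min(z,W)^d\le\min(z^d,Wz^{d-1})$. This is exactly where the second extra seller is used: with one extra seller, the analogue of $\Phi_{1,1}$ is $-z(1-z)<0$.

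After that, truncating to $[0,1-q_x]$ is free because $\psi_{n,i}$ is non-increasing in $w$. The case $m\ge n$ follows from the likelihood-ratio ordering $f_{i,m}\propto(1-w)^{m-n}f_{i,n}$, not from an induction on $m$.
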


Via a standard reduction (i.e.~negating values and costs then swapping the roles of buyers and sellers; see~\cite[Proposition~A.1]{BabaioffGG20}), Theorem~\ref{thm:main} symmetrically implies that when $n \ge m$ and $F_B \FSD F_S$, recruiting \emph{two additional buyers} enables Buyer Trade Reduction to match the first-best GFT.
In particular, Theorem~\ref{thm:main} significantly strengthens the two-sided recruitment guarantee of~\cite{CaiLMZ24}, which recruits a large constant number of traders on both sides, to recruiting only two additional sellers and no additional buyers, and resolves the bilateral trade ($m = n = 1$) open problem of~\cite[Section~4.3]{BabaioffGG20} by improving their upper bound of $4$ additional traders down to $2$. Since social welfare equals Gains From Trade plus the total initial endowment of all sellers, $\STR(m, n + 2) \ge \OPT(m, n)$ also immediately implies that the social welfare of $\STR(m, n + 2)$ exceeds the first-best social welfare of the original $(m, n)$ market whenever seller costs are non-negative.

We complement Theorem~\ref{thm:main} by showing that this uniform bound of two cannot be improved: one additional seller does not suffice already in bilateral trade. \citet[Theorem~4.3]{BabaioffGG20} had already proved that recruiting $1$ additional trader fails for \emph{anonymous, deterministic} prior-independent mechanisms when $m = n = 1$. Building on the structure of their hard instance, we extend this impossibility result to hold for \emph{every} prior-free DSIC, IR, and WBB mechanism—deterministic or randomized.
\begin{theorem}[One Trader is Insufficient]\label{thm:lower-bound}
Let $M$ be any prior-free mechanism (deterministic or randomized) for $m = 1$ buyer and $k = 2$ sellers that is dominant-strategy incentive-compatible (DSIC), individually rational (IR), and weakly budget-balanced (WBB). Then there exist distributions $F_B \FSD F_S$ on $[0, 1]$ with $\OPT(1, 1) > 0$ such that
\begin{equation}\label{eq:impossibility-m1-n1}
    \mathrm{GFT}_M(1, 2) < \OPT(1, 1).
\end{equation}
\end{theorem}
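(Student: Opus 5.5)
The plan is a Yao-style argument over a small finite family of instances. I would fix a grid of possible values and costs, say $\{0,\tfrac12,1\}$ (or $\{0,a,1\}$ with $a$ tuned later). I would then exhibit two or three pairs $(F_B^{(j)},F_S^{(j)})$ supported on this grid, each with $F_B^{(j)}\FSD F_S^{(j)}$, together with nonnegative weights $\lambda_j$, such that every DSIC, IR, WBB mechanism $M$ for one buyer and two sellers satisfies
\[
\sum_j \lambda_j\bigl(\mathrm{GFT}_M^{(j)}(1,2)-\OPT^{(j)}(1,1)\bigr)<0 .
\]
This immediately yields one instance on which $M$ fails. The natural starting point is the hard instance of \cite[Theorem~4.3]{BabaioffGG20}, which already defeats anonymous deterministic mechanisms. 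The work is to make the same inequality hold for mechanisms that may be non-anonymous and randomized.

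First I would reduce an arbitrary randomized mechanism to a finite collection of variables. For each of the $27$ grid profiles $(b,s_1,s_2)$, let $x(b,s_1,s_2)$ be the probability that the buyer trades and $y_i(b,s_1,s_2)$ the probability that seller $i$ sells, with $y_1+y_2=x$. I would read DSIC in expectation over the mechanism's coins; this is the weakest notion, so the result covers universally truthful mechanisms as well. Under this reading, Myerson's characterization on each agent's one-dimensional type space applies: $x$ is monotone nondecreasing in $b$, each $y_i$ is monotone nonincreasing in $s_i$, and payments are pinned down by the payment identity up to the IR constants. Because only grid points matter for the chosen distributions, I may assume the most favourable extension off the grid. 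This lets me write interim payments as sums over grid steps, for example a buyer payment of $b\,x(b,\cdot)-\sum_{b'<b}(b'^{+}-b')\,x(b',\cdot)$. Taking WBB in expectation (again the weakest form), it then becomes a \emph{linear} inequality in the $x$ and $y$ variables at each seller profile. IR fixes the constants in the tightest way. The upshot is a finite LP whose variables are trade probabilities, whose constraints are monotonicity and budget balance, and whose objective is expected GFT under a given instance.

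Second, I would choose the instances so that efficiency forces full trade at certain profiles, which then clashes with budget balance. An instance with $F_S=\delta_0$ and $F_B=\delta_1$ forces $x(1,0,0)=1$. A second instance mixes a seller cost of $a$ with a buyer value near $a$, so that matching $\OPT(1,1)$ requires trading at a profile like $(a,0,a)$ or $(a,a,a)$. There the payment identity forces the buyer to pay at most about $a$. Meanwhile the trading seller, whose own report of $a$ is tied with the competitor's, must be paid close to $a$ unless trade probability is sacrificed at a neighbouring profile. Summing the WBB constraints over the relevant seller profiles with appropriate multipliers should cancel the payment terms and leave a strict inequality on the GFT deficits. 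This is exactly the certificate $\lambda$, obtained as the dual of the LP.

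The main obstacle is this last step: finding the instances and the multipliers. The mechanism is not assumed symmetric, so the dual must handle $y_1\neq y_2$. My first attempt would be to symmetrize. I would average $M$ with its seller-swapped copy; this preserves DSIC, IR, WBB (in expectation) and expected GFT, since sellers are i.i.d. That reduces the problem to anonymous randomized mechanisms and lets me follow the BGG instance closely, now with fractional $x$. If a clean hand-derived dual does not emerge, I would solve the small LP numerically over the grid and a one-parameter family of distributions. From the solution I would read off the binding constraints, then verify the resulting rational certificate by hand, checking in particular that $\OPT(1,1)>0$ and the FSD condition hold for every instance in the family.
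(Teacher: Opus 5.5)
\paragraph{The gap: a three-point grid cannot work.} No family of instances supported on a three-point grid $\{0,a,1\}$ can give the certificate, for any $a$ and any weights $\lambda_j$. On such a grid, first-best bilateral trade is itself DSIC, IR and WBB.

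Take the mechanism $M^*$ that ignores seller~2 and trades between the buyer and seller~1 iff either $s_1=0$ and $b\ge a$, or $0<s_1\le a$ and $b=1$. It charges threshold prices:
\begin{itemize}
\item the buyer pays $a$ when $s_1=0$, and pays $1$ when $s_1\in(0,a]$;
\item seller~1 receives $0$ when $b\in[a,1)$, and receives $a$ when $b=1$.
\end{itemize}
This is a deterministic monotone threshold mechanism on all of $[0,1]^3$, so it is DSIC and IR. It is WBB at every trading profile, since $a\ge 0$, $a\ge a$ and $1\ge a$. It trades at every grid profile with $b>s_1$, namely $(1,0)$, $(1,a)$ and $(a,0)$. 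Hence for every pair of distributions on $\{0,a,1\}$, FSD or not, $\mathrm{GFT}_{M^*}(1,2)=\OPT(1,1)$ exactly. Your weighted deficit is therefore $0$ for $M^*$, your grid LP is feasible, and no strict dual certificate exists.

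Symmetrizing $M^*$ (give the role of seller~1 to a uniformly random seller) yields an anonymous randomized mechanism with the same property. So the reduction to symmetric mechanisms does not rescue the plan. This is exactly why the deterministic-anonymous argument of Babaioff et al.\ cannot be made to work for randomized mechanisms on such a coarse grid. The second paragraph of your plan, trading at $(a,0,a)$ or $(a,a,a)$ with near-tied prices, is where the tension would have to appear, but on three points it simply does not.

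\paragraph{The missing idea: tension that accumulates over many cost levels.} The paper gets this from a continuum. Its instances are indexed by every $c\in[0,1)$: sellers put mass $p$ at $c$ and $1-p$ at $1$; the buyer is uniform on $[c,1]$ with mass $p$, plus an atom at $1$.

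At the mixed profile $(1,c,1)$, the low-cost seller's information rent is $\int_c^1 q(t)\,dt$, where $q(t)=x_1(1,t,1)$; that is, it depends on her trade probability at all higher costs. Summing IR utilities and using WBB then gives
\[
\int_c^1 x_B(t,c,1)\,dt+\int_c^1 q(t)\,dt\le (1-c)\,q(c).
\]
Requiring $\mathrm{GFT}_M\ge\OPT$ for every $c$ yields
\[
q(c)\ge \tfrac12\bigl(1-\tfrac p2\bigr)+\tfrac{p}{1-c}\int_c^1 q(t)\,dt .
\]
Passing to $\inf q$ gives $\inf q\ge \frac{1-p/2}{2(1-p)}>1$ for $p>2/3$, which is impossible for a probability.

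The contradiction comes from bootstrapping this bound through an unbounded chain of cost levels. A discretization of this argument on a sufficiently fine grid, with many instances, might plausibly work. Two or three instances on $\{0,a,1\}$ provably cannot.
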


Here $\mathrm{GFT}_M(1, 2)$ denotes the expected Gains From Trade of $M$ under truthful reporting; since the lower bound is the only place in this paper where general mechanisms (rather than $\STR$) appear, we defer the formal definitions of $\mathrm{GFT}_M$ and of the DSIC, IR, and WBB axioms to Section~\ref{sec:lower-bound}, where they are used.

Table~\ref{tab:comparison} summarizes how Theorems~\ref{thm:main} and~\ref{thm:lower-bound} compare with prior competition complexity bounds for beating $\OPT(m, n)$ in two-sided markets with $m \ge n$.

\begin{table}[ht!]
\centering
\scriptsize
\setlength{\tabcolsep}{0pt}
\renewcommand{\arraystretch}{1.3}
\begin{tabular*}{\textwidth}{@{\extracolsep{\fill}}llcccc@{}}
\toprule
& & \multicolumn{2}{c}{\textbf{Upper Bound}} & \multicolumn{2}{c}{\textbf{Lower Bound}} \\
\cmidrule(lr){3-4} \cmidrule(lr){5-6}
\textbf{Setting ($m \ge n$)} & \textbf{Augmented Side} & \textbf{Prev.\ Work} & \textbf{This Work} & \textbf{Prev.\ Work} & \textbf{This Work} \\
\midrule
$F_B = F_S$ & Either side & $1$~\cite{BabaioffGG20} & --- & $1$~\cite{MyersonSatterthwaite83} & --- \\
$F_B \FSD F_S$ ($m = n = 1$) & Either side & $4$~\cite{BabaioffGG20} & $\mathbf{2}$ (Thm.~\ref{thm:main}) & $2$ (det.)~\cite{BabaioffGG20} & $\mathbf{2}$ (Thm.~\ref{thm:lower-bound}) \\
$F_B \FSD F_S$ ($m \ge n$) & Buyers (abundant) & $n(m + 4\sqrt{m})$~\cite{BabaioffGG20} & --- & $\lfloor \log_2 m \rfloor + 1$~\cite{BabaioffGG20} & --- \\
$F_B \FSD F_S$ ($m \ge n$) & Both sides & $O(1)$~\cite{CaiLMZ24} & $\mathbf{2}$ (Thm.~\ref{thm:main}) & $2$ (det.)~\cite{BabaioffGG20} & $\mathbf{2}$ (Thm.~\ref{thm:lower-bound}) \\
$F_B \FSD F_S$ ($m \ge n$) & \textbf{Sellers (scarce)} & --- & $\mathbf{2}$ \textbf{(Thm.~\ref{thm:main})} & $2$ (det.)~\cite{BabaioffGG20} & $\mathbf{2}$ \textbf{(Thm.~\ref{thm:lower-bound})} \\
\bottomrule
\end{tabular*}
\caption{Number of additional traders needed for a prior-independent DSIC, IR, and WBB mechanism to achieve expected Gains From Trade at least $\OPT(m, n)$ when $m \ge n$. Since Theorem~\ref{thm:main} places both additional traders on the scarce side, it also implies the corresponding two-sided bound. 
Under Lower Bound, the entries on the $m \ge n$ rows are uniform bounds over all $m \ge n \ge 1$ (proven at $m = n = 1$), and ``$2$ (det.)~\cite{BabaioffGG20}'' holds for anonymous deterministic mechanisms, whereas Theorem~\ref{thm:lower-bound} holds for all (possibly randomized, non-anonymous) mechanisms.}
\label{tab:comparison}
\end{table}

\subsection{Proof Overview}\label{subsec:proof-overview}

To motivate our approach, it is instructive to contrast it with the techniques of \citet{BabaioffGG20} and \citet{CaiLMZ24}. \citet{BabaioffGG20} recruit on the \emph{buyer} side; their analysis bounds the gap to the augmented optimum and reduces the $n$-seller market to $n$ single-seller markets at a linear overhead in $n$. \citet{CaiLMZ24} instead recruit on \emph{both} sides. Because $\STR$ drops at most one pair from the efficient matching, a natural intuition is to recruit enough traders so that the efficient trade size $r$ increases by at least $1$. However, as observed in~\cite{CaiLMZ24}, adding eligible traders does not deterministically increase $r$; to overcome this, Cai et al.~\cite{CaiLMZ24} coupled all $m + n + 2c$ uniform quantiles and partitioned the extreme quantiles into buckets of size $\Theta(n)$ to construct a ``good'' event $\mathcal{E}_1$ (where $r$ increases by at least $2$) whose probability dominates the ``bad'' event $\mathcal{E}_2$ (where $\OPT > \STR$)—an argument that inherently requires a large constant $c$ on both sides of the market.

In this paper, we obtain the sharp bound of \emph{just two} additional sellers ($\STR(m, n + 2) \ge \OPT(m, n)$) by bypassing global trade-size events and quantile bucketing altogether.

\paragraph{Rewriting and Decomposing Expected GFT (Section~\ref{sec:layer-cake}).}
Our starting point is rewriting both expected GFT in ex-post. Notice that in $\STR(m, n+2)$, the $i$-th highest buyer $b^{(i)}$ trades with the $i$-th lowest seller $s_{n+2}^{(i)}$ if and only if $b^{(i)}$ also exceeds the $(i+1)$-th seller's cost $s_{n+2}^{(i+1)}$, generating surplus $(b^{(i)} - s_{n+2}^{(i)}) \I[b^{(i)} \ge s_{n+2}^{(i+1)}]$ as compared to $(b^{(i)} - s_n^{(i)})^+$ in $\OPT(m, n)$. Rewriting the term using $(b - s)^+ = \int_{-\infty}^{\infty} \I[s \le x \le b] \, dx$, we have
$$\STR(m, n+2) = \sum_{i=1}^{\min\{m, n+1\}}\int_{-\infty}^{\infty}\mathbb{E}\left[\I[s_{n+2}^{(i)} \leq x \text{ and } s_{n+2}^{(i+1)}\leq b^{(i)}]\cdot\I[b^{(i)} \geq x]\right]\, dx$$
$$\OPT(m, n) = \sum_{i=1}^n \int_{-\infty}^{\infty}\mathbb{E}\left[\I[s_n^{(i)} \leq x]\cdot\I[b^{(i)} \geq x]\right]\, dx$$

We combine the two equations above. Conditioning on a buyer that exceeds $x$ ($b^{(i)} \ge x$), moving from $\OPT(m, n)$ to $\STR(m, n + 2)$ trades off two competing forces: a \emph{recruitment gain} (drawing $n + 2$ sellers instead of $n$ increases the probability that at least $i$ sellers have cost at most $x$) against a \emph{trade-reduction loss} (even when $s_{n+2}^{(i)} \le x$, $\STR$ cancels the $i$-th trade if $s_{n+2}^{(i+1)} > b^{(i)}$). 
More precisely, we show that the difference between $\STR$ and $\OPT$ has the form
\[
\STR(m, n+2) - \OPT(m, n) \geq \sum_{i=1}^n \int_{-\infty}^{\infty} \underbrace{\mathbb{E}_{b^{(i)}}\left[ (\gain(x) - \loss(x, b^{(i)})) \cdot \I[b^{(i)} \geq x] \right]}_{\eqcolon I_i(x)} \, dx.
\]
Thus, to prove Theorem~\ref{thm:main}, it suffices to show that $I_i(x) \geq 0$ for all $x \in \mathbb{R}$ and $i \in [n]$.
Although this decomposition requires only elementary algebra, it introduces a new and useful perspective.
Rather than dealing directly with the cardinal surplus differences $b - s$ in value space, it reduces the GFT comparison---without any relaxation---to purely probabilistic events at each fixed threshold $x$. Because $I_i(x)$ depends only on these event probabilities, this enables a tight analysis in quantile space by explicitly evaluating and comparing the exact probabilities at every threshold $x$.

\paragraph{Balancing the Gain and Loss in the Quantile Space (Sections~\ref{sec:warmup}--\ref{sec:generalization}).}
We start from the simplest setting where $F_B = F_S = F$ and $m = n$ (Section~\ref{sec:warmup}).
Note that $I_i(x)$ contains an expectation of the $i$-th highest value buyer. So it is technically more convenient to work in quantile space via $w=1-F(b^{(i)})$. It allows us to write $I_i(x) = \int_{0}^{1-p} (\gain(x) - \loss(x, w)) f_{m, i}(w) \, dw$, where $f_{m, w}$ is the p.d.f.~of the $i$-th order statistic of $m$ i.i.d.~uniform random variables in $[0, 1]$ and $p = F_B(x)$. We show that $\gain(x) - \loss(x, w)$ is decreasing in $w$. 
So if $\gain(x) - \loss(x, 1-p) \geq 0$ then clearly $I_i(x) \geq 0$. Otherwise, we extend the integral range from $0$ to $1$: $I_i(x) \geq \int_{0}^{1} (\gain(x) - \loss(x, \min\{w, 1-p\})) f_{m, i}(w) \, dw$. By exploiting the algebraic structure of $\gain(x)$ and $\loss(x, w)$, we prove that the RHS term is non-negative by rewriting the term elegantly as an expectation of a function of the r.v. $w$.

Although the above setting is already covered in Babaioff et al.\cite{BabaioffGG20}, the proof readily extends to the setting where
(i) $F_B$ stochastically dominates $F_S$ and (ii) $m \geq n$, with minor changes.
The former extension can be done by observing that a nearly identical integral appears (just with a different upper limit) while the latter is a straightforward algebraic exercise by showing the per-density GFT difference $\frac{\int_{0}^{1-p} (\gain(x) - \loss(x, w)) f_{m, i}(w)dw}{\int_{0}^{1-p} f_{m, i}(w)dw}$ increases from $m=n$ to any $m>n$.

\paragraph{Tightness of Recruiting Two Sellers (Section~\ref{sec:lower-bound}).}
Finally, while recruiting a single trader suffices when $F_B = F_S$~\cite{BabaioffGG20}, Babaioff et al.~\cite[Theorem~4.3]{BabaioffGG20} established the impossibility result that under $F_B \FSD F_S$, recruiting one additional trader fails for any anonymous, deterministic prior-independent mechanism when $m = n = 1$. In Section~\ref{sec:lower-bound}, we build on their construction to extend this impossibility to \emph{every} prior-free DSIC, IR, and WBB mechanism (deterministic or randomized, without assuming anonymity). As in~\cite{BabaioffGG20}, our hard instance uses a two-point seller distribution supported on $\{c, 1\}$ ($c \in [0, 1)$) so that when the costs are $(s_1, s_2) = (c, 1)$, the high-cost seller provides no competitive pressure below $1$.
In Section~\ref{sec:lower-bound}, we show how to extend this distribution to hold for any randomized mechanism as well.

\subsection{Related Work}\label{subsec:related-work}

\paragraph{Competition Complexity in Two-Sided Markets.}
Our work directly continues the line of research on Bulow--Klemperer-style competition complexity in two-sided markets initiated by \citet{BabaioffGG20} and later followed up by \citet{CaiLMZ24}; Table~\ref{tab:comparison} collects the resulting recruitment bounds. Two features distinguish this line of work from the approximation literature discussed below. First, the benchmark is the first-best GFT of the \emph{original} market, and the guarantee is exact rather than a constant-factor approximation. Second, the mechanisms are \emph{prior-independent}: $\mathrm{BTR}$ and $\STR$ are defined without reference to $F_B$ or $F_S$, and the prior enters only through the sampling of the agents. Babaioff et al.~\cite{BabaioffGG20} augment the abundant side of the market, and Cai et al.~\cite{CaiLMZ24} augment both sides. Theorems~\ref{thm:main} and~\ref{thm:lower-bound} settle the question raised in \cite[Section~4.3]{BabaioffGG20} about the optimal number of recruits one needs for $m = n = 1$ (they showed the answer is in $\{2, 3, 4\}$) and the question in \cite[Section~5]{CaiLMZ24} of what is achievable by augmenting only the scarce side.

\paragraph{Approximating Gains From Trade and Welfare in Two-Sided Markets.}
Following Myerson and Satterthwaite's impossibility result~\cite{MyersonSatterthwaite83} and McAfee's Trade Reduction mechanism \cite{McAfee92}, a rich literature has studied approximately efficient mechanisms for bilateral trade and double auctions. These results differ along two axes worth separating: the objective (gains from trade versus social welfare) and the benchmark (the first-best optimum versus the second-best optimum, i.e., the best BIC, IR, and WBB mechanism). In bilateral trade ($m = n = 1$), McAfee~\cite{McAfee08} showed that posting any fixed price between the seller's median and the buyer's median achieves $\frac{1}{2}\OPT(1, 1)$ whenever $\mathrm{med}(F_B) \ge \mathrm{med}(F_S)$, a condition implied by $F_B \FSD F_S$. Subsequent works established constant-factor GFT and welfare approximations using prior-dependent posted-price or randomized-offerer mechanisms~\cite{BlumrosenM16, ColiniBaldeschiGKLT17, BrustleCWZ17, BlumrosenD21}. In a major breakthrough, Deng, Mao, Sivan, and Wang \cite{DengMSW22} proved that a simple mechanism, which with probability $\frac{1}{2}$ lets the seller make a take-it-or-leave-it offer to the buyer and otherwise lets the buyer make a take-it-or-leave-it offer to the seller, achieves a universal constant fraction of the \emph{first-best} GFT for arbitrary distributions, thereby bounding the gap between first-best and second-best GFT by a constant. The constant was subsequently improved by Fei~\cite{Fei22} and, most recently, by Liu, Qin, Ren, and Wang~\cite{LiuQRW26}, who established the tight bound that the second-best GFT is always at least $\frac{1}{2}$ of the first-best GFT. Kang, Pernice, and Vondr{\'a}k~\cite{KangPV22} and Cai and Wu~\cite{CaiW23} studied fixed-price approximation guarantees in bilateral trade under different information assumptions; for social welfare, Cai and Wu~\cite{CaiW23} obtained improved bounds under full prior information and a tight $\frac{2}{3}$ approximation when the designer knows only the mean of a single agent's value. Recent works have also studied bilateral trade with interactive communication~\cite{MaoPLW22}, online and no-regret learning~\cite{BabaioffFN24, BernasconiCCF24}, and strategic intermediation~\cite{HajiaghayiHPS25}. 

Beyond bilateral trade, approximation guarantees have been developed for welfare in double auctions and combinatorial two-sided markets~\cite{BabaioffN04, DuettingRT14, ColiniBaldeschiKLT16, ColiniBaldeschiGKLRT20, BraunK23} and for GFT in single-dimensional markets~\cite{BabaioffCGZ18, BrustleCWZ17, SegalHaleviHA18a, SegalHaleviHA18b}. 
A recent work by Babaioff, Rubinstein, Tan, and Wang~\cite{BabaioffRTW26} proves a constant factor approximation to the first-best GFT in any matching market. For multi-dimensional two-sided markets, Cai, Goldner, Ma, and Zhao~\cite{CaiGMZ21} obtained the first logarithmic approximation guarantees for GFT in a setting with one buyer and multiple unit-supply sellers. A recent work by Rubinstein, Tan, and Zhou~\cite{RubinsteinTZ26} obtained constant-factor guarantees to the first-best GFT in more general settings where both side of the market are multi-dimensional. In contrast to these multiplicative approximations, which either lose a factor of $\OPT$ or rely on knowledge of the prior, our result achieves $100\%$ of the original market's first-best GFT via a single prior-independent mechanism by recruiting two sellers.

\paragraph{Bulow--Klemperer-Style Results in One-Sided Markets.}
In one-sided revenue maximization, Bulow and Klemperer's theorem~\cite{BulowKlemperer96} has been extended across a wide array of single- and multi-dimensional auction environments. For single-dimensional settings, Hartline and Roughgarden~\cite{HartlineR09} and Dughmi, Roughgarden, and Sundararajan~\cite{DughmiRS09} extended competition complexity to non-identical regular bidders and matroid feasibility constraints; Sivan and Syrgkanis~\cite{SivanS13} treated irregular distributions; Dhangwatnotai, Roughgarden, and Yan~\cite{DhangwatnotaiRY15} and Fu, Liaw, and Randhawa~\cite{FuLR19} connected single-sample and duplicate-bidder mechanisms to optimal revenue; and Liu and Psomas~\cite{LiuP18} studied dynamic auctions. For multi-dimensional settings, competition complexity bounds have been established for unit-demand and additive bidders over independent items~\cite{RoughgardenTCY12, EdenFFTW17, FeldmanFR18, BeyhaghiW19, CaiS21}.

\section{Model and Preliminaries}\label{sec:preliminaries}

We consider a two-sided market with $m$ buyers and $k$ sellers ($m, k \in \Z_{\ge 1}$). Each buyer's value $b_j$ is drawn i.i.d. from a distribution $D_B$ with CDF $F_B$ on $\R$, and each seller's cost $s_\ell$ is drawn i.i.d. from a distribution $D_S$ with CDF $F_S$. Denote $\mathbf{b} = (b_1, \dots, b_m)$ and $\mathbf{s}_k = (s_1, \dots, s_k)$. Here we write the subscript $k$ explicitly to distinguish the profile of the market before and after recruitment. By standard probabilistic convention, $F_B$ and $F_S$ are right-continuous and non-decreasing on $\R$, with left limits $F(x^-) := \lim_{z \uparrow x} F(z)$. We make no continuity or support assumptions on $F_B$ or $F_S$. We denote $n$ the number of sellers in the original market.

\begin{definition}[Order Statistics and Quantile Functions]\label{def:setting}
Given a valuation profile $\mathbf{b} \in \R^m$, we sort buyers in non-increasing order:
\[
    b^{(1)} \ge b^{(2)} \ge \dots \ge b^{(m)}.
\]
Given a seller cost profile $\mathbf{s}_k \in \R^k$, we sort sellers in non-decreasing order:
\[
    s_k^{(1)} \le s_k^{(2)} \le \dots \le s_k^{(k)},
\]
and define the sentinel cost $s_k^{(k+1)} := +\infty$. For any CDF $F$, its left-continuous generalized inverse (quantile function) $F^{-1} : (0, 1) \to \R$ is defined by
\[
    F^{-1}(u) := \inf\bigl\{ z \in \R : F(z) \ge u \bigr\} \qquad \text{for } u \in (0, 1).
\]
We denote the buyer quantile function by $F_B^{-1}(u) := \inf\{z \in \R : F_B(z) \ge u\}$. By right-continuity of $F_B$, we have that $F^{-1}(u) \le z$ if and only if $u \le F_B(z)$ for all $u \in (0, 1)$ and $z \in \R$; in particular, setting $z = F_B^{-1}(u)$ yields $F_B(F_B^{-1}(u)) \ge u$.
\end{definition}

\subsection{Mechanisms and Incentives}
\label{subsec:mechanisms}

A mechanism in a two-sided market specifies, for any reported profile of buyer valuations $\mathbf{b} = (b_1, \dots, b_m)$ and seller costs $\mathbf{s}_k = (s_1, \dots, s_k)$, an allocation rule and a payment rule:
\begin{itemize}[leftmargin=2em, topsep=2pt, itemsep=2pt]
    \item \textbf{Allocation rule:} For each buyer $j \in \{1, \dots, m\}$ and seller $\ell \in \{1, \dots, k\}$, indicators $x_j^B(\mathbf{b}, \mathbf{s}_k) \in [0, 1]$ and $x_\ell^S(\mathbf{b}, \mathbf{s}_k) \in [0, 1]$ indicating the probability of the agent trades, subject to the feasibility constraint $\sum_{j=1}^m x_j^B(\mathbf{b}, \mathbf{s}_k) = \sum_{\ell=1}^k x_\ell^S(\mathbf{b}, \mathbf{s}_k)$.
    \item \textbf{Payment rule:} $p_j^B(\mathbf{b}, \mathbf{s}_k) \in \mathbb{R}$ paid by buyer $j$, and $p_\ell^S(\mathbf{b}, \mathbf{s}_k) \in \mathbb{R}$ paid to seller $\ell$.
\end{itemize}
Assuming quasi-linear utilities, a traded buyer's utility is $b_j\cdot x_j^B - p_j^B$, while a traded seller's utility is $p_\ell^S - s_\ell\cdot x_\ell^S$.
A mechanism is \emph{Dominant-Strategy Incentive Compatible} (DSIC) if every agent maximizes their utility when reporting truthfully, regardless of what the other agents report. We say that a mechanism is \emph{Individually Rational} (IR) if every agent obtains non-negative utility when reporting truthfully, no matter what the other agents report. A mechanism is \emph{Weakly Budget-Balanced} (WBB) if it never runs a deficit ex post, i.e., $\sum_{j=1}^m p_j^B(\mathbf{b}, \mathbf{s}_k) \ge \sum_{\ell=1}^k p_\ell^S(\mathbf{b}, \mathbf{s}_k)$ for all realization profiles $(\mathbf{b}, \mathbf{s}_k)$. Finally, a mechanism is \emph{prior-independent} if its allocation and payment rules do not depend on the underlying distributions $F_B$ and $F_S$.

\begin{definition}[First-Best Gains From Trade ($\OPT$)]\label{def:opt}
For a market with $m$ buyers and $n$ sellers ($m, n \in \Z_{\ge 1}$), the efficient trade size is $r(\mathbf{b}, \mathbf{s}_n) := \max\bigl(\{0\} \cup \{i \in \{1, \dots, \min(m, n)\} : b^{(i)} \ge s_n^{(i)}\}\bigr)$. When $m \ge n \ge 1$, the ex-post first-best Gains From Trade is
\[
    \OPT(\mathbf{b}, \mathbf{s}_n) := \sum_{i=1}^{r(\mathbf{b}, \mathbf{s}_n)} \bigl(b^{(i)} - s_n^{(i)}\bigr) = \sum_{i=1}^n \bigl(b^{(i)} - s_n^{(i)}\bigr)^+,
\]
where $z^+ := \max(z, 0)$. We denote its expectation by $\OPT(m, n) := \E_{\mathbf{b} \sim F_B^m, \, \mathbf{s}_n \sim F_S^n}[\OPT(\mathbf{b}, \mathbf{s}_n)]$.
\end{definition}

\begin{definition}[Seller Trade Reduction ($\STR$)]
\label{def:str}
Given $\mathbf{b} \in \R^m$ and $\mathbf{s}_k \in \R^k$, let $r' := r(\mathbf{b}, \mathbf{s}_k)$ be the efficient trade size. If $b^{(r')} \ge s_k^{(r'+1)}$ (the \emph{unreduced} case), the top $r'$ pairs trade at price $s_k^{(r'+1)}$. Otherwise (the \emph{reduced} case), the top $r' - 1$ pairs trade, with buyers paying $b^{(r')}$ and sellers receiving $s_k^{(r')}$ (no trade if $r' \le 1$). 

Denoting the number of traded pairs by $t(\mathbf{b}, \mathbf{s}_k) := \max\bigl(0, r' - \I[b^{(r')} < s_k^{(r'+1)}]\bigr)$, the GFT of STR is
\[
\STR(\mathbf{b}, \mathbf{s}_k) := \sum_{i=1}^{t(\mathbf{b}, \mathbf{s}_k)} \bigl(b^{(i)} - s_k^{(i)}\bigr).
\]
\end{definition}

\begin{lemma}
(\cite{BabaioffGG20})
Seller Trade Reduction is DSIC, IR, WBB, and prior-independent.
\end{lemma}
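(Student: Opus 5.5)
The plan is to verify the three axioms directly from Definition~\ref{def:str}, using a threshold-price characterization for DSIC; prior-independence is immediate, since the allocation and payment rules use only the reported profile $(\mathbf{b},\mathbf{s}_k)$ and never $F_B$ or $F_S$.

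\textbf{Weak budget balance.} In the unreduced case every trading buyer pays $s_k^{(r'+1)}$ and every trading seller receives the same amount, so the budget is exactly balanced. In the reduced case each of the $r'-1$ trading buyers pays $b^{(r')}$ and each trading seller receives $s_k^{(r')}$. By the definition of $r'$ we have $b^{(r')}\ge s_k^{(r')}$, so the surplus $(r'-1)(b^{(r')}-s_k^{(r')})$ is nonnegative.

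\textbf{Individual rationality.} A trading buyer $i\le t$ satisfies $b^{(i)}\ge b^{(r')}\ge s_k^{(r'+1)}$ in the unreduced case, and $b^{(i)}\ge b^{(r')}$ in the reduced case. So the buyer's price never exceeds the buyer's value. Symmetrically, a trading seller has $s_k^{(i)}\le s_k^{(r')}\le s_k^{(r'+1)}$, so the seller's payment covers the seller's cost. Non-trading agents pay and receive nothing.

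\textbf{DSIC.} I would use the single-parameter characterization (Myerson): a deterministic mechanism is DSIC iff each agent's allocation is monotone in its own report and the agent pays (or receives) its critical threshold. I fix the other reports and consider a buyer. I claim the buyer trades iff its bid is at least a threshold $\tau$ that does not depend on its own bid. To identify $\tau$, I case on the buyer's rank:
\begin{itemize}
\item If the buyer is among the top $r'-1$ in the reduced case, raising its bid keeps $r'$ and the $r'$-th buyer unchanged. Lowering its bid below $b^{(r')}$ makes it the new $r'$-th buyer, and that buyer is excluded from trade. So the threshold is $b^{(r')}$, which is the price charged.
\item In the unreduced case, the buyer trades iff its bid is at least $s_k^{(r'+1)}$. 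If it bids below that, either $r'$ drops, or the buyer becomes the marginal buyer, and the reduction condition then excludes it. So the threshold equals the price.
\end{itemize}
Monotonicity and the threshold property together imply truthfulness.

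The seller side is symmetric, with thresholds $s_k^{(r'+1)}$ and $s_k^{(r')}$ respectively. A careful point here is that in the reduced case the trading sellers' threshold is $s_k^{(r')}$. A seller bidding above it either becomes the $r'$-th seller, which is the one cut, or breaks $r'$.

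I expect the main obstacle to be this case analysis for DSIC, namely checking that a unilateral deviation cannot change $r'$ or the reduction status in a way that moves the threshold. Ties require consistent tie-breaking. I would handle both by a short exchange argument on sorted order, or simply cite \cite{BabaioffGG20}, as the statement does.
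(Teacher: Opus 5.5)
The paper gives no proof of this lemma; it states it with a citation to \cite{BabaioffGG20}. Your direct verification is therefore a self-contained alternative rather than a reconstruction of an argument in the text, and it is correct. The WBB and IR arguments are complete as written. The critical values you identify are $s_k^{(r'+1)}$ for both sides in the unreduced case, and $b^{(r')}$ for buyers and $s_k^{(r')}$ for sellers in the reduced case. These coincide with the payments in Definition~\ref{def:str}, which is exactly what DSIC needs for a deterministic mechanism in which non-traders pay nothing. Your route checks the specific payment rule stated in this paper (for instance, that reduced-case sellers are paid $s_k^{(r')}$), rather than leaving the reader to match it against \cite{BabaioffGG20}. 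The citation, in exchange, buys brevity.

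When you write the DSIC part out, tighten a few sub-case claims: their conclusions are right, but the stated mechanics are not always accurate. A reduced-case buyer who bids below $b^{(r')}$ need not become the new $r'$-th buyer, since it can fall further and $r'$ can drop. In the unreduced case you omit the sub-case where the buyer falls below $b^{(r'+1)}$ while $r'$ is unchanged.

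One argument covers all of these. Suppose the buyer bids below its threshold $\tau$. Then at least $r'-1$ other buyers bid strictly more, and the new rank-$r'$ and rank-$(r'+1)$ buyer bids are both below $s_k^{(r'+1)}$. Hence $r'$ cannot rise, and either it falls or the profile is reduced. Either way at most $r'-1$ pairs trade, and the deviator, ranked at least $r'$, is excluded.

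The seller side is not literally symmetric, because the unreduced price is a seller bid, so give it its own check of the same kind. For example, an unreduced seller bidding above $s_k^{(r'+1)}$ falls to rank at least $r'+1$ while $r'$ stays fixed, so it cannot trade.

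Finally, note two things. The analysis from any trading bid already determines the entire set of trading bids, so thresholds computed from different trading bids automatically agree, and non-trading starting positions need no separate treatment. Ties at the trade/no-trade boundary arise only when the deviator bids exactly $\tau$, where it is indifferent, so any fixed tie-breaking rule suffices.
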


\section{Decomposing Expected Gains From Trade}\label{sec:layer-cake}

We begin by expressing the surplus difference $\STR(m, n + 2) - \OPT(m, n)$ as a sum of rank-by-rank threshold integrals. The following lemma is the main takeaway of this section: it shows that proving Theorem~\ref{thm:main} reduces to checking the non-negativity of a single threshold integral $I_i(x)$ for each price threshold $x \in \R$, which cleanly separates the recruitment gain from the trade-reduction loss.

\begin{lemma}[Threshold Integral Representation of Surplus Difference]\label{prop:layer-cake-main}
Let $m \ge n \ge 1$ and let $F_B, F_S$ be any distributions on $\R$ with $\OPT(m, n) < \infty$. Then
\begin{equation}\label{eq:surplus-diff-sum}
    \STR(m, n + 2) - \OPT(m, n) \ge \sum_{i=1}^n \int_{-\infty}^{\infty} I_i(x) \, dx,
\end{equation}
where for each rank $i \in \{1, \dots, n\}$ and threshold $x \in \R$ with $p := F_S(x) \in [0, 1]$,
\begin{equation}\label{eq:I-i-def}
    I_i(x) := p^i \, \E_{b^{(i)}}\left[ \left( A_{n,i}(p) - \binom{n+2}{i} \bigl(1 - F_S(b^{(i)})\bigr)^{n+2-i} \right) \I\bigl[b^{(i)} \ge x\bigr] \right],
\end{equation}
and $A_{n,i}(p)$ is defined as
\begin{equation}\label{eq:A-def}
    A_{n,i}(p) := \binom{n}{i-1} (1 - p)^{n-i+1} (2 - p) + \binom{n}{i-2} (1 - p)^{n-i+2},
\end{equation}
with the convention $\binom{n}{-1} := 0$.
\end{lemma}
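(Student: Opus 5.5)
The plan is to write both expectations ex post as sums over ranks, convert each surplus term into an integral over a threshold $x$ via the layer-cake identity, and then compute the resulting conditional probabilities in closed form given $b^{(i)}$.

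\textbf{Step 1: rank-wise form of $\STR$.} I will show that, in the market with $k=n+2$ sellers, the $i$-th pair trades under $\STR$ if and only if $b^{(i)} \ge s_{n+2}^{(i+1)}$, using the sentinel $s_{n+2}^{(n+3)}=+\infty$. The argument is as follows.
\begin{itemize}[leftmargin=2em, topsep=2pt, itemsep=2pt]
\item The sequence $b^{(i)}$ is non-increasing in $i$ and $s_{n+2}^{(i+1)}$ is non-decreasing, so the set of indices $i$ with $b^{(i)}\ge s_{n+2}^{(i+1)}$ is an initial segment.
\item If $b^{(r')}\ge s^{(r'+1)}$, then this segment is exactly $\{1,\dots,r'\}$. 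Indeed, $b^{(r'+1)}\ge s^{(r'+2)}\ge s^{(r'+1)}$ would contradict the maximality of $r'$.
\item Otherwise the segment is $\{1,\dots,r'-1\}$, because $b^{(r'-1)}\ge b^{(r')}\ge s^{(r')}$.
\end{itemize}
In both cases its size equals $t(\mathbf b,\mathbf s_{n+2})$. Moreover, $b^{(i)}\ge s^{(i+1)}$ implies $b^{(i)}\ge s^{(i)}$, so every summand below is nonnegative:
\[
\STR(\mathbf b,\mathbf s_{n+2})=\sum_{i=1}^{\min(m,n+1)}\bigl(b^{(i)}-s_{n+2}^{(i)}\bigr)\,\I\bigl[b^{(i)}\ge s_{n+2}^{(i+1)}\bigr].
\]
I then drop the terms with $i>n$. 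This is valid because they are nonnegative and $m\ge n$, and it is the only source of the inequality in \eqref{eq:surplus-diff-sum}.

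\textbf{Step 2: layer cake.} I apply $(b-s)\I[b\ge s]=\int\I[s\le x\le b]\,dx$ to each term of both $\STR$ and $\OPT$. By Tonelli (all integrands are nonnegative) and the independence of buyers and sellers, I condition on $b^{(i)}=b\ge x$. The $\OPT$ term becomes $\Pr[s_n^{(i)}\le x]$. The $\STR$ term becomes
\[
\Pr\bigl[s_{n+2}^{(i)}\le x\bigr]-\Pr\bigl[s_{n+2}^{(i)}\le x,\ s_{n+2}^{(i+1)}>b\bigr].
\]
Since $x\le b$, the last event says that exactly $i$ sellers have cost $\le b$ and all of them have cost $\le x$. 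Equivalently, exactly $i$ sellers lie in $(-\infty,x]$ and the remaining $n+2-i$ lie in $(b,\infty)$, so its probability is $\binom{n+2}{i}p^i(1-F_S(b))^{n+2-i}$ with $p=F_S(x)$. Using $\Pr[s\le x]=F_S(x)$ and $\Pr[s>b]=1-F_S(b)$ makes ties harmless.

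\textbf{Step 3: the gain term.} I compute $\Pr[\mathrm{Bin}(n+2,p)\ge i]-\Pr[\mathrm{Bin}(n,p)\ge i]$ by splitting the $n+2$ sellers into the original $n$ and two recruits. The difference comes from two cases:
\begin{itemize}[leftmargin=2em, topsep=2pt, itemsep=2pt]
\item $\mathrm{Bin}(n,p)=i-1$ and at least one recruit is $\le x$, which contributes $\binom{n}{i-1}p^{i-1}(1-p)^{n-i+1}\cdot p(2-p)$;
\item $\mathrm{Bin}(n,p)=i-2$ and both recruits are $\le x$, which contributes $\binom{n}{i-2}p^{i-2}(1-p)^{n-i+2}\cdot p^2$.
\end{itemize}
Their sum is exactly $p^iA_{n,i}(p)$. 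Combining the pieces and integrating over the law of $b^{(i)}$ yields $I_i(x)$.

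\textbf{Main obstacle: measure-theoretic bookkeeping.} The subtraction of $\OPT$ from $\STR$ must be justified, and I would handle it in this order:
\begin{itemize}[leftmargin=2em, topsep=2pt, itemsep=2pt]
\item Use $\OPT(m,n)<\infty$ to ensure the difference of integrals is well defined.
\item Use Tonelli separately on the nonnegative $\STR$ integrand and the nonnegative $\OPT$ integrand before combining them.
\item Merge the two $x$-integrals only after both are known to be finite; if $\STR$ is infinite, the inequality is trivial.
\end{itemize}
The combinatorial identities in Steps 1 and 3 are routine by comparison.
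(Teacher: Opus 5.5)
Your proposal is correct and matches the paper's proof step for step: the rank-wise indicator form of $\STR$ (dropping the nonnegative terms with $i>n$), the layer-cake/Tonelli exchange, splitting off the ``exactly $i$ sellers in $(-\infty,x]$, the other $n+2-i$ in $(b^{(i)},\infty)$'' loss event, and the two-case binomial computation giving $p^iA_{n,i}(p)$. The only differences are cosmetic: you prove Step 1 with an initial-segment argument instead of the paper's three-case split on $i$ versus $r'$, and you handle finiteness a bit more carefully (trivial case when $\STR$ is infinite) than the paper's single appeal to Tonelli.
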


We establish Lemma~\ref{prop:layer-cake-main} in four steps: expressing $\STR$ ex post as a sum of indicator terms (Lemma~\ref{lem:str-indicator}), integrating over price thresholds at each rank (Lemma~\ref{lem:layer-cake-indicators}), decomposing the seller event into gain minus loss (Lemma~\ref{lem:prob-diff}), and evaluating the exact binomial gain for $n + 2$ sellers (Lemma~\ref{lem:cdf-diff-n+2}).

\begin{lemma}[Ex-Post Indicator Representation of $\STR$]\label{lem:str-indicator}
For any $m, k \in \Z_{\ge 1}$ and any realization $(\mathbf{b}, \mathbf{s}_k) \in \R^m \times \R^k$:
\begin{enumerate}[label=(\alph*), leftmargin=2em, topsep=2pt, itemsep=2pt]
    \item $\STR(\mathbf{b}, \mathbf{s}_k) = \sum_{i=1}^{\min(m, k-1)} \bigl(b^{(i)} - s_k^{(i)}\bigr) \I\bigl[b^{(i)} \ge s_k^{(i+1)}\bigr]$.
    \item If $m \ge n \ge 1$ and $k \ge n + 1$, then $\min(m, k - 1) \ge n$ and every summand is non-negative, so
    \begin{equation}\label{eq:str-rank-lower-bound}
        \STR(\mathbf{b}, \mathbf{s}_k) \ge \sum_{i=1}^n \bigl(b^{(i)} - s_k^{(i)}\bigr) \I\bigl[b^{(i)} \ge s_k^{(i+1)}\bigr].
    \end{equation}
\end{enumerate}
\end{lemma}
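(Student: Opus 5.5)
The plan is to prove part (a) by showing that the indicator $\I[b^{(i)} \ge s_k^{(i+1)}]$ is exactly the indicator that rank $i$ trades, i.e.\ that $i \le t(\mathbf{b}, \mathbf{s}_k)$. By monotonicity of the sorted sequences ($b^{(i)}$ non-increasing, $s_k^{(i+1)}$ non-decreasing in $i$), the set of ranks $i$ with $b^{(i)} \ge s_k^{(i+1)}$ is an initial segment $\{1,\dots,i^*\}$. So it suffices to show $i^* = t$, with the sentinel $s_k^{(k+1)} = +\infty$ automatically killing the rank $i = k$ term. This is also why the upper summation limit can be taken as $\min(m, k-1)$ without loss.

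To identify $i^*$ with $t$, let $r' = r(\mathbf{b}, \mathbf{s}_k)$. First, for every $i \le r'-1$ we have $b^{(i)} \ge b^{(r')}$. We also have $b^{(r')} \ge s_k^{(r')}$; I want $b^{(i)} \ge s_k^{(i+1)}$, and since $i+1 \le r'$, this follows from $b^{(i)} \ge b^{(i+1)} \ge s_k^{(i+1)}$, using the definition of $r'$ as the maximal index with $b^{(i)} \ge s_k^{(i)}$ together with monotonicity. Second, for $i \ge r'+1$ (with $i \le \min(m,k)$) we have $b^{(i)} < s_k^{(i)} \le s_k^{(i+1)}$ by maximality of $r'$, so these terms vanish (ranks beyond $\min(m,k)$ do not appear). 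Third, rank $r'$ itself contributes iff $b^{(r')} \ge s_k^{(r'+1)}$, which is precisely the unreduced case. Hence $i^* = r' - \I[b^{(r')} < s_k^{(r'+1)}] = t$ when $r' \ge 1$. When $r' = 0$, we have $b^{(1)} < s_k^{(1)} \le s_k^{(2)}$, so there is no term and $t = 0$.

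Substituting gives $\STR(\mathbf{b}, \mathbf{s}_k) = \sum_{i=1}^{t}(b^{(i)} - s_k^{(i)})$, which is the sum in (a). The main obstacle is only the bookkeeping of the edge cases: $r'=0$, $r'=1$ in the reduced case (where $t=0$ matches), and $r' = \min(m,k)$ with $r' = k$ (where the sentinel forces reduction). I would check each of these explicitly.

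For part (b), $k-1 \ge n$ and $m \ge n$ give $\min(m,k-1) \ge n$. Each summand with a nonzero indicator satisfies $b^{(i)} \ge s_k^{(i+1)} \ge s_k^{(i)}$, so the summand is non-negative. Dropping the terms with $i > n$ then yields \eqref{eq:str-rank-lower-bound}.
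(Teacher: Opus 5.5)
Your proposal is correct and follows essentially the same route as the paper. Both split on $i$ relative to $r' = r(\mathbf{b}, \mathbf{s}_k)$: the indicator is forced to $1$ for $i \le r'-1$, equals the unreduced-case test at $i = r'$, and is forced to $0$ for $i > r'$, with the sentinel $s_k^{(k+1)} = +\infty$ handling $r' = k$. Part (b) then follows, as in the paper, from $b^{(i)} - s_k^{(i)} \ge s_k^{(i+1)} - s_k^{(i)} \ge 0$. Your initial-segment framing and the detour through $b^{(i+1)} \ge s_k^{(i+1)}$ are harmless; the paper simply uses $b^{(i)} \ge b^{(r')} \ge s_k^{(r')} \ge s_k^{(i+1)}$ directly.
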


\begin{proof}
Let $r' := r(\mathbf{b}, \mathbf{s}_k) \in \{0, \dots, \min(m, k)\}$. For any $i \in \{1, \dots, \min(m, k-1)\}$:
\begin{itemize}[leftmargin=1.5em, topsep=2pt, itemsep=2pt]
    \item If $i \le r' - 1$, then $b^{(i)} \ge b^{(r')} \ge s_k^{(r')} \ge s_k^{(i+1)}$, so $\I[b^{(i)} \ge s_k^{(i+1)}] = 1$.
    \item If $i = r'$, the indicator $\I[b^{(r')} \ge s_k^{(r'+1)}]$ matches Definition~\ref{def:str}. (If $r' = k$, then $s_k^{(k+1)} = +\infty$ forces $\I[b^{(k)} \ge s_k^{(k+1)}] = 0$, so the $k$-th pair never trades in $\STR$ and summing up to $\min(m, k-1)$ is exact.)
    \item If $i > r'$, then by definition of $r'$, $b^{(i)} < s_k^{(i)} \le s_k^{(i+1)}$, so $\I[b^{(i)} \ge s_k^{(i+1)}] = 0$.
\end{itemize}
This proves part~(a). For part~(b), whenever $\I[b^{(i)} \ge s_k^{(i+1)}] = 1$, we have $b^{(i)} - s_k^{(i)} \ge s_k^{(i+1)} - s_k^{(i)} \ge 0$. Since $k \ge n + 1 \implies \min(m, k-1) \ge n$, dropping any non-negative terms for $i > n$ yields~\eqref{eq:str-rank-lower-bound}.
\end{proof}

\begin{lemma}\label{lem:layer-cake-indicators}
Let $m \ge n \ge i \ge 1$ and $k \ge n + 1$. If $\OPT(m, n) < \infty$, then
\begin{equation}\label{eq:layer-cake-diff}
    \E\Bigl[ \bigl(b^{(i)} - s_k^{(i)}\bigr) \I\bigl[b^{(i)} \ge s_k^{(i+1)}\bigr] - \bigl(b^{(i)} - s_n^{(i)}\bigr)^+ \Bigr] = \int_{-\infty}^{\infty} \E_{b^{(i)}}\Bigl[ \Delta_{n,k,i}\bigl(x, b^{(i)}\bigr) \I\bigl[b^{(i)} \ge x\bigr] \Bigr] dx,
\end{equation}
where
\begin{equation}\label{eq:delta-def}
    \Delta_{n,k,i}(x, b^{(i)}) := \Pr_{\mathbf{s}_k}\bigl[s_k^{(i)} \le x \textnormal{ and } s_k^{(i+1)} \le b^{(i)}\bigr] - \Pr_{\mathbf{s}_n}\bigl[s_n^{(i)} \le x\bigr].
\end{equation}
\end{lemma}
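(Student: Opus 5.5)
The plan is a layer-cake argument applied to each of the two terms, followed by conditioning on $b^{(i)}$ and exploiting independence of buyers and sellers.

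\textbf{Pointwise layer-cake identities.} For any reals $s\le b$ we have $b-s=\int_{-\infty}^{\infty}\I[s\le x<b]\,dx$; since the set $\{x=b\}$ has Lebesgue measure zero, we may write the integrand as $\I[s\le x]\,\I[b\ge x]$. Apply this to each term:
\[
(b^{(i)}-s_k^{(i)})\,\I[b^{(i)}\ge s_k^{(i+1)}]=\int_{-\infty}^{\infty}\I\bigl[s_k^{(i)}\le x,\ s_k^{(i+1)}\le b^{(i)}\bigr]\,\I[b^{(i)}\ge x]\,dx .
\]
This is valid because on the event $b^{(i)}\ge s_k^{(i+1)}$ we automatically have $b^{(i)}\ge s_k^{(i)}$, and off that event both sides vanish. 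Similarly,
\[
(b^{(i)}-s_n^{(i)})^+=\int_{-\infty}^{\infty}\I[s_n^{(i)}\le x]\,\I[b^{(i)}\ge x]\,dx .
\]

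\textbf{Taking expectations.} Both integrands are nonnegative, so Tonelli lets us swap $\E$ and $\int dx$. Since the seller profiles are independent of $\mathbf b$, we condition on $b^{(i)}$ and the inner seller expectations become $\Pr_{\mathbf s_k}[s_k^{(i)}\le x,\ s_k^{(i+1)}\le b^{(i)}]$ and $\Pr_{\mathbf s_n}[s_n^{(i)}\le x]$. This gives two expressions of the form $\int\E_{b^{(i)}}[\,\cdot\,\I[b^{(i)}\ge x]]\,dx$.

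\textbf{Subtracting, and the main obstacle.} The delicate step is justifying the subtraction: we need both integrals finite so that their difference equals the integral of the difference $\Delta_{n,k,i}$.

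The $\OPT$ term is finite because $\E[(b^{(i)}-s_n^{(i)})^+]\le\OPT(m,n)<\infty$.

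For the $\STR$ term, I would dominate it by a finite $\OPT$-type quantity:
\[
(b^{(i)}-s_k^{(i)})\,\I[b^{(i)}\ge s_k^{(i+1)}]\le (b^{(i)}-s_k^{(i)})^+ .
\]
To bound the right-hand side, use that $s_k^{(i)}\ge\min_\ell s_\ell$, so $(b^{(i)}-s_k^{(i)})^+\le(b^{(1)}-\min_\ell s_\ell)^+$. Finiteness of $\OPT(m,n)$ gives $\E(b^{(1)}-s_n^{(1)})^+<\infty$, and since $F_B$ and $F_S$ are independent, $\E(b_1-s_1)^+<\infty$. Then
\[
(\max_j b_j-\min_\ell s_\ell)^+\le\sum_{j,\ell}(b_j-s_\ell)^+,
\]
which has finite expectation. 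Hence both integrals are finite.

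\textbf{Conclusion.} With both integrals finite, linearity of expectation and of the integral combine them into the single integral of $\E_{b^{(i)}}[\Delta_{n,k,i}(x,b^{(i)})\,\I[b^{(i)}\ge x]]$, which yields \eqref{eq:layer-cake-diff}.
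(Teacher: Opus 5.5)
Your proposal is correct and follows the paper's proof: the same two pointwise layer-cake identities, Tonelli on the non-negative integrands, and independence of $\mathbf{b}$ from the seller profiles, which turns the inner expectations into the seller probabilities in $\Delta_{n,k,i}$. Your extra step bounding the $\STR$ term by $\sum_{j,\ell}(b_j-s_\ell)^+$ spells out a finiteness check the paper leaves implicit. There, $\E(b_1-s_1)^+\le\E(b^{(1)}-s_n^{(1)})^+\le\OPT(m,n)$ follows from monotonicity rather than independence. Finiteness of the $\OPT$ term alone would already make the subtraction valid in $(-\infty,+\infty]$, so proving the $\STR$ term finite is convenient rather than essential.
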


\begin{proof}
Notice that 
$$\bigl(b^{(i)} - s_k^{(i)}\bigr) \I\bigl[b^{(i)} \ge s_k^{(i+1)}\bigr] = \int_{-\infty}^{\infty}\I\bigl[s_k^{(i)}\le x\le b^{(i)}\bigr] \cdot \I[b^{(i)} \ge s_k^{(i+1)}] \, dx$$
$$(b^{(i)} - s_n^{(i)})^+ = \int_{-\infty}^{\infty} \I[s_n^{(i)} \le x] \cdot \I[b^{(i)} \ge x] \, dx$$

We take expectation of the above equations over $(\mathbf{b}, \mathbf{s}_n, \mathbf{s}_k)$. Since both integrands are non-negative and $\OPT(m, n) < \infty$, Tonelli's theorem and the independence of $\mathbf{b}$ from $(\mathbf{s}_n, \mathbf{s}_k)$ justify exchanging expectation and integration term by term, yielding~\eqref{eq:layer-cake-diff}.
\end{proof}

\begin{lemma}[Seller Event Decomposition]\label{lem:prob-diff}
Let $p := F_S(x)$ and $q := F_S(b^{(i)})$. When $b^{(i)}\geq x$,
\begin{equation}\label{eq:prob-diff}
    \Delta_{n,k,i}(x, b^{(i)}) = \Bigl( \Pr_{\mathbf{s}_k}\bigl[s_k^{(i)} \le x\bigr] - \Pr_{\mathbf{s}_n}\bigl[s_n^{(i)} \le x\bigr] \Bigr) - \binom{k}{i} p^i (1 - q)^{k-i}.
\end{equation}
\end{lemma}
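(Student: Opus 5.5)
The plan is to prove \eqref{eq:prob-diff} by complementary counting inside the $k$-seller market, treating $b^{(i)}$ as a fixed real number $b \ge x$ (legitimate because $\mathbf{s}_k$ is independent of $\mathbf{b}$). First I split the event $\{s_k^{(i)} \le x\}$ according to whether $s_k^{(i+1)} \le b$:
\[
\Pr_{\mathbf{s}_k}\bigl[s_k^{(i)} \le x \text{ and } s_k^{(i+1)} \le b\bigr] = \Pr_{\mathbf{s}_k}\bigl[s_k^{(i)} \le x\bigr] - \Pr_{\mathbf{s}_k}\bigl[s_k^{(i)} \le x \text{ and } s_k^{(i+1)} > b\bigr].
\]
Subtracting $\Pr_{\mathbf{s}_n}[s_n^{(i)} \le x]$ from both sides matches the definition \eqref{eq:delta-def} of $\Delta_{n,k,i}$. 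So it only remains to show that the last probability equals $\binom{k}{i} p^i (1-q)^{k-i}$.

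For this I translate order statistics into counts. Let $N_x := |\{\ell : s_\ell \le x\}|$ and $N_b := |\{\ell : s_\ell \le b\}|$. Then $s_k^{(i)} \le x$ holds exactly when $N_x \ge i$. Likewise, $s_k^{(i+1)} > b$ holds exactly when $N_b \le i$. This second equivalence also covers $i = k$: there the sentinel $s_k^{(k+1)} = +\infty$ makes the event certain, and $N_b \le k$ always holds.

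Since $x \le b$, we have $N_x \le N_b$. Combining the two conditions forces $N_x = N_b = i$. Conversely, $N_x = N_b = i$ clearly implies both conditions. So the event in question is: exactly $i$ sellers have cost at most $x$, and the remaining $k - i$ sellers have cost strictly greater than $b$.

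For a single seller, the events $\{s_\ell \le x\}$ and $\{s_\ell > b\}$ are disjoint because $x \le b$. They have probabilities $p = F_S(x)$ and $1 - q = 1 - F_S(b)$, with no continuity assumptions needed since both use the right-continuous CDF with non-strict $\le$. Choosing which $i$ sellers lie below $x$ and using independence across sellers gives probability $\binom{k}{i} p^i (1-q)^{k-i}$, which completes the identity.

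The only delicate step is the careful handling of ties and atoms, namely getting $\le$ versus $>$ right in the count equivalences, together with the sentinel case $i = k$. I expect no real obstacle beyond that.
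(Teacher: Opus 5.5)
Your proposal is correct and follows the paper's proof: split $\{s_k^{(i)} \le x\}$ according to whether $s_k^{(i+1)} \le b^{(i)}$, and identify the second piece as the event that exactly $i$ sellers lie in $(-\infty, x]$ and the remaining $k-i$ lie in $(b^{(i)}, \infty)$, which has probability $\binom{k}{i} p^i (1-q)^{k-i}$. Your count-based argument via $N_x$ and $N_b$ spells out the paper's ``if and only if'' step in more detail, including the sentinel case $i = k$.
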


\begin{proof}
Since $s_k^{(i)} \le s_k^{(i+1)}$, the event $\{s_k^{(i)} \le x\}$ partitions into two disjoint events:
\[
    \bigl\{s_k^{(i)} \le x\bigr\} = \bigl\{s_k^{(i)} \le x \text{ and } s_k^{(i+1)} \le b^{(i)}\bigr\} \;\sqcup\; \bigl\{s_k^{(i)} \le x \text{ and } s_k^{(i+1)} > b^{(i)}\bigr\}.
\]
The second event occurs if and only if \emph{exactly} $i$ of the $k$ i.i.d.\ sellers fall in $(-\infty, x]$ (each with probability $F_S(x) = p$) and the remaining $k - i$ sellers fall in $(b^{(i)}, \infty)$ (each with probability $1 - F_S(b^{(i)}) = 1 - q$). Since $x \le b^{(i)}$, the intervals $(-\infty, x]$ and $(b^{(i)}, \infty)$ are strictly disjoint. Hence $\Pr_{\mathbf{s}_k}[s_k^{(i)} \le x \text{ and } s_k^{(i+1)} > b^{(i)}] = \binom{k}{i} p^i (1 - q)^{k-i}$. Subtracting $\Pr_{\mathbf{s}_n}[s_n^{(i)} \le x]$ yields~\eqref{eq:prob-diff}.
\end{proof}

\begin{lemma}[Seller Binomial Gain for $k = n + 2$]\label{lem:cdf-diff-n+2}
For any $n \ge i \ge 1$ and threshold $x \in \R$ with $p := F_S(x) \in [0, 1]$,
\begin{equation}\label{eq:cdf-diff-n+2}
    \Pr_{\mathbf{s}_{n+2}}\bigl[s_{n+2}^{(i)} \le x\bigr] - \Pr_{\mathbf{s}_n}\bigl[s_n^{(i)} \le x\bigr] = p^i A_{n,i}(p).
\end{equation}
\end{lemma}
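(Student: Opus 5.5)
The identity is exact and purely combinatorial in $p = F_S(x)$, so the plan is to express both probabilities through binomial counts and then use a coupling in which the first $n$ of the $n+2$ sellers form the original market. Let $N_k$ denote the number of sellers among the first $k$ whose cost is at most $x$. Then $N_k \sim \mathrm{Bin}(k,p)$ and $\{s_k^{(i)} \le x\} = \{N_k \ge i\}$. Writing $N_{n+2} = N_n + Y$ with $Y \sim \mathrm{Bin}(2,p)$ independent of $N_n$, the left-hand side of~\eqref{eq:cdf-diff-n+2} becomes
\[
\Pr[N_n + Y \ge i] - \Pr[N_n \ge i] = \Pr[N_n + Y \ge i,\ N_n \le i-1].
\]
This holds because $N_n \ge i$ already implies $N_n + Y \ge i$.

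We then split on the value of $N_n$. The event requires either $N_n = i-1$ and $Y \ge 1$, or $N_n = i-2$ and $Y = 2$. Its probability is therefore
\[
\binom{n}{i-1}p^{i-1}(1-p)^{n-i+1}\bigl(1-(1-p)^2\bigr) + \binom{n}{i-2}p^{i-2}(1-p)^{n-i+2}\,p^2 .
\]
Using $1-(1-p)^2 = p(2-p)$, the first term becomes $p^i\binom{n}{i-1}(1-p)^{n-i+1}(2-p)$ and the second becomes $p^i\binom{n}{i-2}(1-p)^{n-i+2}$. Their sum is exactly $p^i A_{n,i}(p)$ as defined in~\eqref{eq:A-def}.

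Edge cases need only a brief check. When $i=1$, the convention $\binom{n}{-1}=0$ removes the second term, which is consistent because $N_n = -1$ is impossible. When $i-1 \le n$, the binomial coefficients are the genuine ones, since $i \le n$. The boundary values $p\in\{0,1\}$ are covered automatically by the polynomial identity, with $0^0 = 1$ in the exponents (for example, $n-i+1 \ge 1$ always holds).

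No step is a real obstacle. The only point needing care is justifying the coupling, namely that the marginal law of $N_n$ under $\mathbf{s}_{n+2}$ restricted to its first $n$ coordinates matches $\mathbf{s}_n$. This holds because the sellers are i.i.d., and both probabilities depend only on the respective marginal laws.
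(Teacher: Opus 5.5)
Your proof is correct and matches the paper's argument. The paper also writes the number of sellers with cost at most $x$ as $X_n + Y_2$, with $X_n \sim \mathrm{Binomial}(n,p)$ and $Y_2 \sim \mathrm{Binomial}(2,p)$ independent, and splits the difference event $\{X_n + Y_2 \ge i > X_n\}$ into the same two disjoint cases ($X_n = i-1$ with $Y_2 \ge 1$, and $X_n = i-2$ with $Y_2 = 2$) to obtain $p^i A_{n,i}(p)$.
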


\begin{proof}
Let $X_n \sim \mathrm{Binomial}(n, p)$ be the number of sellers among the first $n$ with cost at most $x$, and let $Y_2 \sim \mathrm{Binomial}(2, p)$ be the number among the two additional sellers. Then $\{s_n^{(i)} \le x\} = \{X_n \ge i\}$ and $\{s_{n+2}^{(i)} \le x\} = \{X_n + Y_2 \ge i\}$. The difference event $\{X_n + Y_2 \ge i > X_n\}$ occurs in two disjoint ways:
\begin{itemize}[leftmargin=1.5em, topsep=2pt, itemsep=2pt]
    \item $X_n = i - 1$ and $Y_2 \ge 1$: probability $\binom{n}{i-1} p^{i-1}(1-p)^{n-i+1} \cdot [1 - (1-p)^2] = \binom{n}{i-1} p^i (1-p)^{n-i+1}(2-p)$.
    \item For $i \ge 2$, $X_n = i - 2$ and $Y_2 = 2$: probability $\binom{n}{i-2} p^{i-2}(1-p)^{n-i+2} \cdot p^2 = \binom{n}{i-2} p^i (1-p)^{n-i+2}$ (for $i = 1$, this second case cannot occur and the right-hand side evaluates to $0$ under $\binom{n}{-1} := 0$).
\end{itemize}
Summing these two probabilities gives $p^i A_{n,i}(p)$.
\end{proof}

\begin{proof}[Proof of Lemma~\ref{prop:layer-cake-main}]
Taking expectations in Lemma~\ref{lem:str-indicator}(b) at $k = n + 2$ and subtracting $\OPT(m, n) = \sum_{i=1}^n \E[(b^{(i)} - s_n^{(i)})^+]$ (Definition~\ref{def:opt}) gives
\[
    \STR(m, n + 2) - \OPT(m, n) \ge \sum_{i=1}^n \E\Bigl[\bigl(b^{(i)} - s_{n+2}^{(i)}\bigr) \I\bigl[b^{(i)} \ge s_{n+2}^{(i+1)}\bigr] - \bigl(b^{(i)} - s_n^{(i)}\bigr)^+\Bigr].
\]
Applying Lemma~\ref{lem:layer-cake-indicators} and substituting Lemmas~\ref{lem:prob-diff} and~\ref{lem:cdf-diff-n+2} with $k = n + 2$ yields~\eqref{eq:surplus-diff-sum}--\eqref{eq:I-i-def}.
\end{proof}

\section{Warm-Up: Complete Proof for Symmetric Continuous Markets (\texorpdfstring{$F_B = F_S, m = n$}{FB = FS, m = n})}\label{sec:warmup}

Before treating general distributions under first-order stochastic dominance, we present the proof of Theorem~\ref{thm:main} in the symmetric continuous setting where $F_B = F_S = F$ is continuous and $m = n$. More specifically, we will show that $I_i(x)\geq 0$ for any $x\in \R$ and $i\in [n]$, and thus the theorem is proved by Lemma~$\ref{prop:layer-cake-main}$.
We first move to quantile space and express the threshold integral $I_i(x)$ from~\eqref{eq:I-i-def} as an integral over the order-statistic density.

\begin{lemma}\label{lem:warmup-identity}
Suppose $m = n \ge 1$ and $F_B = F_S = F$ is a continuous CDF on $\R$. Then for every rank $i \in \{1, \dots, n\}$ and threshold $x \in \R$ with $p := F(x) \in [0, 1]$,
\begin{equation}\label{eq:warmup-integral-identity}
    I_i(x) = p^i \int_0^{1-p} \left( A_{n,i}(p) - \binom{n+2}{i} w^{n+2-i} \right) f_{i,n}(w) \, dw,
\end{equation}
where $f_{i,n}(w) := n \binom{n-1}{i-1} w^{i-1}(1 - w)^{n-i}$ is the $\mathrm{Beta}(i, n - i + 1)$ density on $(0, 1)$.
\end{lemma}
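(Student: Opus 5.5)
The plan is a change of variables to quantile space. Since $F_B = F_S = F$ is continuous, the random variables $U_j := F(b_j)$ are i.i.d.\ uniform on $[0,1]$. Hence $W := 1 - F(b^{(i)})$ is distributed as the $i$-th smallest of $m = n$ i.i.d.\ uniforms: the $i$-th highest $b$ maps to the $i$-th highest $U$, i.e.\ the $i$-th smallest $1-U$. So $W \sim \mathrm{Beta}(i, n-i+1)$ with density $f_{i,n}(w) = n\binom{n-1}{i-1} w^{i-1}(1-w)^{n-i}$. This is the standard order-statistic density: choose which of the $n$ points is the $i$-th, choose which $i-1$ lie below it, and account for the remaining $n-i$ lying above.

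Next, rewrite the integrand of \eqref{eq:I-i-def} in terms of $W$. The factor $p^i$ and $A_{n,i}(p)$ depend only on $x$, so they are constants inside the expectation. The term $(1 - F_S(b^{(i)}))^{n+2-i}$ becomes exactly $W^{n+2-i}$.

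It remains to convert the indicator $\I[b^{(i)} \ge x]$ into an event on $W$. Since $F$ is nondecreasing, $b^{(i)} \ge x$ implies $F(b^{(i)}) \ge p$, i.e.\ $W \le 1-p$. The converse can fail when $F$ is flat on an interval ending at $x$: we could have $F(b^{(i)}) = p$ with $b^{(i)} < x$.

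This discrepancy is the only step I expect to need care. I would handle it by noting that, up to a null set, $b^{(i)} < x$ together with $F(b^{(i)}) \ge p$ forces $F(b^{(i)}) = F(x)$. By continuity of $F$, this forces $W = 1-p$, which is a single point and has probability zero under the Beta law. An alternative route is to observe that $F$ puts no mass on the flat piece $\{y < x : F(y) = F(x)\}$, so $b^{(i)}$ lands there with probability zero. Either way, $\I[b^{(i)} \ge x] = \I[W \le 1-p]$ almost surely.

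The edge cases $p \in \{0,1\}$ are consistent with this. When $p=0$, the right-hand side of \eqref{eq:warmup-integral-identity} vanishes because of the factor $p^i$. When $p=1$, the integration range is empty, and $W \le 0$ has probability zero.

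Putting these together, $I_i(x) = p^i\,\E\bigl[(A_{n,i}(p) - \binom{n+2}{i}W^{n+2-i})\I[W \le 1-p]\bigr]$. Writing this expectation against the density $f_{i,n}$ gives \eqref{eq:warmup-integral-identity}. The integrand is bounded on $[0,1]$, so there are no integrability issues.
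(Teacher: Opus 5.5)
Your proposal is correct and follows essentially the same route as the paper: the probability integral transform $U_j = F(b_j)$, the identification of $W = 1 - F(b^{(i)})$ as a $\mathrm{Beta}(i, n-i+1)$ order statistic, and the almost-sure conversion of $\I[b^{(i)} \ge x]$ into $\I[W \le 1-p]$ via monotonicity of $F$ together with $\Pr[W = 1-p] = 0$. Your explicit handling of flat pieces of $F$ and of the edge cases $p \in \{0,1\}$ simply spells out what the paper compresses into one line.
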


\begin{proof}
Since $F$ is continuous, the quantiles $U_j := F(b_j)$ for $j \in \{1, \dots, n\}$ are i.i.d.\ $\mathrm{Uniform}(0, 1)$. Sorting buyers from highest to lowest valuation ($b^{(1)} \ge \dots \ge b^{(n)}$) preserves the order of their uniform quantiles ($U^{(1)} \ge \dots \ge U^{(n)}$), so $F(b^{(i)}) = U^{(i)}$ holds deterministically. Setting $w:= 1 - U^{(i)}$, which is the $i$-th smallest uniform order statistic out of $n$ draws and follows the $\mathrm{Beta}(i, n - i + 1)$ density $f_{i,n}(w)$~\cite{DavidNagaraja03}, we obtain:
\begin{enumerate}[label=(\alph*), leftmargin=2em, topsep=2pt, itemsep=2pt]
    \item The quantile $U^{(i)} = F(b^{(i)}) = \Pr_{s \sim F}[s \le b^{(i)}]$ is the probability that a random seller is willing to trade with buyer $b^{(i)}$, and its complement $w = 1 - U^{(i)} = 1 - F(b^{(i)}) = \Pr_{s \sim F}[s > b^{(i)}]$ is the probability that a random seller is unwilling to trade with buyer $b^{(i)}$ (so $(1 - F(b^{(i)}))^{n+2-i} = w^{n+2-i}$ is the probability that all $n + 2 - i$ remaining sellers are unwilling to trade with $b^{(i)}$).
    \item Since $F$ is non-decreasing and $U^{(i)}$ is continuous ($\Pr[U^{(i)} = p] = 0$), the active participation event $\{b^{(i)} \ge x\}$ is almost surely equivalent to requiring that a random seller is willing to trade with $b^{(i)}$ with probability at least $p = F(x)$, i.e., $\{F(b^{(i)}) \ge F(x)\} = \{U^{(i)} \ge p\} = \{w \le 1 - p\}$.
\end{enumerate}
Substituting~(a) and~(b) into~\eqref{eq:I-i-def} yields~\eqref{eq:warmup-integral-identity}.
\end{proof}

By Lemma~\ref{prop:layer-cake-main} and Lemma~\ref{lem:warmup-identity}, to complete the proof of Theorem~\ref{thm:main} for symmetric continuous markets ($F_B = F_S = F$ and $m = n$), it suffices to show that $I_i(x) \ge 0$ for every threshold $x \in \R$. In fact, we prove a slightly more general inequality—replacing $w$ in the integrand of~\eqref{eq:warmup-integral-identity} with $\min(1 - p, w)$ and allowing the upper limit of integration to be any $r \in [0, 1]$—both because this truncated integrand enables a direct reduction to an integral over the full interval $[0, 1]$, and because this exact generality will be reused in Section~\ref{sec:generalization} when we treat general distributions $F_B \FSD F_S$.

\begin{lemma}[Non-Negativity of the Truncated Integral]\label{lem:warmup-reduction}
For any integers $n \ge i \ge 1$ and any $p \in [0, 1]$, define the truncated integrand on $[0, 1]$ by
\begin{equation}\label{eq:psi-def}
    \psi_{n,i}(w; p) := A_{n,i}(p) - \binom{n+2}{i} \min(1 - p, w)^{n+2-i}.
\end{equation}
Then for every $r \in [0, 1]$,
\begin{equation}\label{eq:truncated-integral-nonneg}
    \int_0^r \psi_{n,i}(w; p) \, f_{i,n}(w) \, dw \ge 0.
\end{equation}
\end{lemma}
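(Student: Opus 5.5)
The plan is to exploit monotonicity of the integrand in $w$, which reduces the claim for all $r$ to the single case $r=1$, and then to verify that case by a direct computation with Beta moments.

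\emph{Step 1: reduce to $r=1$.} Since $w \mapsto \min(1-p,w)^{n+2-i}$ is non-decreasing on $[0,1]$, the function $\psi_{n,i}(\cdot\,;p)$ is non-increasing in $w$. It is also constant on $[1-p,1]$, and $f_{i,n} \ge 0$. So there are two cases.
\begin{itemize}
\item If $\psi_{n,i}(1;p) \ge 0$, the integrand is non-negative on all of $[0,1]$, and \eqref{eq:truncated-integral-nonneg} holds for every $r$.
\item Otherwise there is a threshold $w^\ast$ with $\psi_{n,i} \ge 0$ on $[0,w^\ast)$ and $\psi_{n,i} < 0$ on $(w^\ast,1]$. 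Then $G(r) := \int_0^r \psi_{n,i} f_{i,n}\,dw$ increases on $[0,w^\ast]$ and decreases on $[w^\ast,1]$. Hence $\min_{r \in [0,1]} G(r) = \min\{G(0), G(1)\} = \min\{0, G(1)\}$.
\end{itemize}
In either case it suffices to prove $G(1) \ge 0$.

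\emph{Step 2: rewrite $G(1)$ as an expectation.} Set $a := 1-p \in [0,1]$, $k := n+2-i \ge 2$, and let $W \sim \mathrm{Beta}(i,n-i+1)$ be the $i$-th smallest of $n$ i.i.d.\ uniforms. Then
\[
G(1) = A_{n,i}(p) - \binom{n+2}{i}\, \E\bigl[\min(a,W)^{k}\bigr],
\]
and we must show
\[
\binom{n+2}{i}\, \E\bigl[\min(a,W)^{k}\bigr] \le \binom{n}{i-1} a^{n-i+1}(1+a) + \binom{n}{i-2} a^{n-i+2} =: A(a).
\]
I would use the tail formula
\[
\E\bigl[\min(a,W)^k\bigr] = \int_0^a k t^{k-1} \Pr[W>t]\,dt, \qquad \Pr[W>t] = \Pr[\mathrm{Bin}(n,t) \le i-1].
\]
This turns the left-hand side into an explicit polynomial $L(a)$. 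Both sides vanish at $a=0$, since the lowest power of $a$ on the right is $a^{n-i+1}$ with $n-i+1 \ge 1$, and $L(0)=0$. So it suffices to compare derivatives:
\[
L'(a) = \binom{n+2}{i}\, k a^{k-1} \Pr[\mathrm{Bin}(n,a) \le i-1] \quad\text{versus}\quad A'(a).
\]
After dividing by $a^{n-i}$, this becomes a polynomial inequality in $a$.

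\emph{Alternative route.} I would also try a probabilistic coupling. Interpret $\binom{n+2}{i} \min(a,W)^k$ via $k$ extra uniforms $V_j$ that all fall below both $a$ and $W$. Interpret $p^i A_{n,i}(p)$ as $\Pr[X_{n+2} \ge i > X_n]$, as in Lemma~\ref{lem:cdf-diff-n+2}. The hope is that the event on the left injects into the event counted by $A_{n,i}$, with $n+2$ uniforms in total and the relevant rank-$i$ configuration.

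\emph{Main obstacle.} I expect the hard step to be this last polynomial/derivative inequality, uniformly over $n \ge i \ge 1$ and $a \in [0,1]$. The sanity checks I would do first are the endpoint $a=1$, where the Beta moment has the closed form $\E[W^k] = B(i+k,\,n-i+1)/B(i,\,n-i+1)$ to compare against $2\binom{n}{i-1} + \binom{n}{i-2}$, and the small case $i=1$. If the derivative comparison fails pointwise, I would instead compare the coefficients of both sides expanded in the Bernstein basis $a^j(1-a)^{n+1-j}$ and show termwise domination.
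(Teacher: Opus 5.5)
Your Step~1 is correct and is essentially the paper's reduction: monotonicity of $\psi_{n,i}(\cdot\,;p)$ reduces every $r$ to $r=1$. The gap is in Step~2. The inequality $\binom{n+2}{i}\E[\min(a,W)^k]\le A(a)$ is the whole substance of the lemma, and you leave it unproved. Worse, the route you propose as primary is false.

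Write $A(a)=\binom{n}{i-1}a^{k-1}+\binom{n+1}{i-1}a^{k}$ and take $n=i=4$, so $k=2$ and $W\sim\mathrm{Beta}(4,1)$. Then $L'(a)=30a(1-a^4)$ and $A'(a)=4+20a$. At $a=\tfrac12$ this gives $L'=\tfrac{225}{16}>14=A'$. More generally, for $i=n$ and any fixed $a\in(\tfrac12,1)$, $L'(a)-A'(a)=(n+1)a\bigl[2-(n+2)a^n\bigr]-n\to\infty$ as $n$ grows.

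The target inequality still holds in the $n=i=4$ case: $A(a)-L(a)=4a-5a^2+5a^6>0$ on $(0,1]$. It just is not monotone, so the derivative comparison is the wrong certificate. Your fallbacks do not close the gap either. $L$ has degree $2n+2-i>n+1$, so it does not lie in the span of $a^j(1-a)^{n+1-j}$ as you set it up. The coupling idea is only a heuristic.

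The missing idea, which is how the paper argues, is that $A$ is $\binom{n+2}{i}$ times a convex combination of the two natural upper bounds for $\min(a,W)^k$. Use $\E[W]=\frac{i}{n+1}$, $\binom{n+1}{i-1}=\frac{i}{n+2}\binom{n+2}{i}$, and $\binom{n}{i-1}=\frac{k}{n+2}\cdot\frac{i}{n+1}\binom{n+2}{i}$. These give
\[
A(a)=\binom{n+2}{i}\Bigl[\tfrac{i}{n+2}\,a^{k}+\tfrac{k}{n+2}\,\E[W]\,a^{k-1}\Bigr],\qquad \tfrac{i}{n+2}+\tfrac{k}{n+2}=1 .
\]
For $a,w\ge 0$ you have the pointwise bounds $\min(a,w)^k\le a^k$ and $\min(a,w)^k\le w\,a^{k-1}$. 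Average them with weights $\frac{i}{n+2}$ and $\frac{k}{n+2}$, then take expectation over $W$. This yields $\E[\min(a,W)^k]\le \frac{i}{n+2}a^k+\frac{k}{n+2}\E[W]a^{k-1}$, which is exactly $G(1)\ge 0$. No tail formula or polynomial inequality is needed.
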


Before proving Lemma~\ref{lem:warmup-reduction} in Subsection~\ref{subsec:warmup-reduction}, let us record how it immediately establishes Theorem~\ref{thm:main} in the symmetric continuous case ($F_B = F_S = F$ and $m = n$): for all $w \in [0, 1 - p]$, we have $\min(1 - p, w) = w$, so $\psi_{n,i}(w; p) = A_{n,i}(p) - \binom{n+2}{i} w^{n+2-i}$. Applying Lemma~\ref{lem:warmup-reduction} at $r = 1 - p \in [0, 1]$ and substituting into Lemma~\ref{lem:warmup-identity} yields
\[
    I_i(x) = p^i \int_0^{1-p} \psi_{n,i}(w; p) \, f_{i,n}(w) \, dw \ge 0 \qquad \forall x \in \R, \; i \in \{1, \dots, n\}.
\]
Integrating over $x \in (-\infty, \infty)$ and summing over $i = 1, \dots, n$ via Lemma~\ref{prop:layer-cake-main} gives $\STR(n, n + 2) \ge \OPT(n, n)$.

\subsection{Proof of Lemma~\ref{lem:warmup-reduction}}\label{subsec:warmup-reduction}

\begin{proof}[Proof of Lemma~\ref{lem:warmup-reduction}]
Let $g(w) := A_{n,i}(p) - \binom{n+2}{i} w^{n+2-i}$, so that the truncated integrand $\psi_{n,i}(w; p)$ from~\eqref{eq:psi-def} satisfies $\psi_{n,i}(w; p) = g(w)$ for $w \in [0, 1 - p]$ and $\psi_{n,i}(w; p) = g(1 - p)$ for $w \in [1 - p, 1]$. Because $n + 2 - i \ge 2$ and $\binom{n+2}{i} > 0$, the function $w \mapsto \min(1 - p, w)^{n+2-i}$ is non-decreasing on $[0, 1]$, which implies that $w \mapsto \psi_{n,i}(w; p)$ is non-increasing on $[0, 1]$.

Fix any $r \in [0, 1]$, so that $\psi_{n,i}(w; p) \ge \psi_{n,i}(r; p)$ for all $w \in [0, r]$ and $\psi_{n,i}(w; p) \le \psi_{n,i}(r; p)$ for all $w \in [r, 1]$. We consider two cases depending on the sign of $\psi_{n,i}(r; p)$:
\begin{enumerate}[label=(\roman*), leftmargin=2em, topsep=2pt, itemsep=2pt]
    \item If $\psi_{n,i}(r; p) \ge 0$, then $\psi_{n,i}(w; p) \ge \psi_{n,i}(r; p) \ge 0$ for all $w \in [0, r]$, so $\int_0^r \psi_{n,i}(w; p) \, f_{i,n}(w) \, dw \ge 0$ immediately.
    \item If $\psi_{n,i}(r; p) < 0$, then $\psi_{n,i}(w; p) \le \psi_{n,i}(r; p) < 0$ for all $w \in [r, 1]$, so $\int_r^1 \psi_{n,i}(w; p) \, f_{i,n}(w) \, dw \le 0$. Extending the upper limit of integration from $r$ to $1$ by adding this non-positive tail integral yields
    \begin{equation}\label{eq:warmup-extension}
        \int_0^r \psi_{n,i}(w; p) \, f_{i,n}(w) \, dw \ge \int_0^1 \psi_{n,i}(w; p) \, f_{i,n}(w) \, dw =: \Phi_{n,i}(p).
    \end{equation}
\end{enumerate}
In order to complete the proof of Lemma~\ref{lem:warmup-reduction}, we require the following lemma that shows $\Phi_{n,i}(p) \ge 0$, which we prove in Subsubsection~\ref{sec:bernstein}.

\begin{lemma}[Non-Negativity of $\Phi_{n,i}(p)$]\label{prop:bernstein-main}
For any integers $n \ge i \ge 1$ and any $p \in [0, 1]$, let $\psi_{n,i}(w; p) = A_{n,i}(p) - \binom{n+2}{i} \min(1 - p, w)^{n+2-i}$ be the truncated integrand from~\eqref{eq:psi-def} and let $f_{i,n}(w) = n \binom{n-1}{i-1} w^{i-1}(1 - w)^{n-i}$ be the $\mathrm{Beta}(i, n - i + 1)$ density on $(0, 1)$. Then $\Phi_{n,i}(p) := \int_0^1 \psi_{n,i}(w; p) \, f_{i,n}(w) \, dw$ satisfies
\[
    \Phi_{n,i}(p) \ge 0.
\]
\end{lemma}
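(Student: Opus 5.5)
The plan is to evaluate $\Phi_{n,i}(p)$ in closed form as a polynomial in $p$, using Beta–binomial identities, and then show it is nonnegative on $[0,1]$ by expanding it in a Bernstein basis. Write $k := n+2-i \ge 2$ and let $W \sim \mathrm{Beta}(i, n-i+1)$, i.e.\ $W$ is the $i$-th smallest of $n$ i.i.d.\ uniforms. Then
\[
\Phi_{n,i}(p) = A_{n,i}(p) - \binom{n+2}{i}\Bigl( \E\bigl[W^{k}\,\I[W \le 1-p]\bigr] + (1-p)^{k}\Pr[W > 1-p] \Bigr),
\]
and I would handle the two expectations separately.

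The tail term is a binomial tail. The event $\{W > 1-p\}$ means that at most $i-1$ of the $n$ uniforms lie in $[0,1-p]$, so $\Pr[W>1-p] = \Pr[\mathrm{Bin}(n,1-p) \le i-1]$.

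For the truncated moment I would use
\[
w^{k} f_{i,n}(w) = n\binom{n-1}{i-1}\, B(i+k, n-i+1)\; g(w),
\]
where $g$ is the $\mathrm{Beta}(i+k, n-i+1)$ density. Since $i+k = n+2$, this density is that of the $(n+2)$-th smallest of $2n+2-i$ uniforms. Hence $\int_0^{1-p} w^k f_{i,n}(w)\,dw$ equals an explicit rational constant times $\Pr[\mathrm{Bin}(2n+2-i, 1-p) \ge n+2]$. After these substitutions $\Phi_{n,i}(p)$ becomes an explicit polynomial in $p$ of degree at most $2n+2-i$. As sanity checks I would verify $\Phi \ge 0$ at the endpoints $p=0$ and $p=1$, and also at $n=i=1$, where everything is elementary: $\Phi_{1,1}(p) = (1-p)(2-p) - \frac{3}{2} + \frac{3}{2}p^2$. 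Its derivative $-3+5p$ is negative at $p=0$, so the minimum is interior and not at an endpoint; this makes the calculation a useful test of the claim.

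The main step is then to rewrite every term in the common Bernstein basis $\{\binom{N}{j}(1-p)^{j}p^{N-j}\}_{j=0}^{N}$ with $N = 2n+2-i$. This requires degree-elevating $A_{n,i}(p)$ and $(1-p)^k\Pr[\mathrm{Bin}(n,1-p)\le i-1]$ by multiplying by $1 = (p + (1-p))^{\ell}$. After that, I would show that the Bernstein coefficients of $\Phi_{n,i}$ are nonnegative, or at least that their partial sums from the top index downward are nonnegative, which would also suffice via an Abel summation argument.

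The hardest step is controlling the signs of these coefficients, because the $A_{n,i}$ terms and the truncated-moment terms are combinatorial sums that must cancel. If a direct coefficient comparison becomes messy, my fallback is a coupling interpretation. I would realize $\binom{n+2}{i}\E[\min(1-p,W)^k]$ as the probability of an event among $n+2$ seller quantiles together with the buyer's quantile. I would then exhibit an injection from that event into the event $\{X_n + Y_2 \ge i > X_n\}$ of Lemma~\ref{lem:cdf-diff-n+2}, suitably averaged over $p$. The intended intuition is that the gain from two extra sellers dominates the loss from trade reduction.
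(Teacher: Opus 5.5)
There is a genuine gap. The step that carries all the content of the lemma, nonnegativity of the Bernstein coefficients of $\Phi_{n,i}$, is only announced and never argued. Your closed form is correct: the tail is $\Pr[\mathrm{Bin}(n,1-p)\le i-1]$, and $w^{k}f_{i,n}(w)$ is a multiple of the $\mathrm{Beta}(n+2,n-i+1)$ density. But it only re-expresses $\Phi_{n,i}$. Nothing in the proposal explains why the combination of $A_{n,i}$, the binomial tail and $\Pr[\mathrm{Bin}(2n+2-i,1-p)\ge n+2]$ should have nonnegative coefficients. Moreover, the cancellation is exact rather than slack: for $n=i=1$, with $z=1-p$, one has $\Phi_{1,1}=z-2z^2+2z^3$, whose degree-$3$ Bernstein coefficients are $(0,\tfrac13,0,1)$.

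Neither fallback rescues this. Nonnegative top-down partial sums do not imply pointwise nonnegativity: for $N=1$, coefficients $(c_0,c_1)=(-1,1)$ give $1-2p$. An injection ``averaged over $p$'' gives at best an averaged inequality, whereas the lemma is needed pointwise in $p$ (it is applied at each fixed threshold $x$ through Lemma~\ref{lem:warmup-reduction}).

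Your sanity check is also miscomputed. The expression $(1-p)(2-p)-\tfrac32+\tfrac32p^2$ is what you get from $\min(1-p,W)$ instead of $\min(1-p,W)^2$. As written it equals $-\tfrac25$ at $p=\tfrac35$, so it would seem to refute the lemma. The correct expression $z(2z^2-2z+1)$ is nonnegative.

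The missing idea is that no closed form for the truncated moment is needed. Let $z=1-p$, $d=n+2-i$ and $W\sim\mathrm{Beta}(i,n-i+1)$, so $\E[W]=\frac{i}{n+1}$. Pascal's rule gives
\[
A_{n,i}(1-z)=\binom{n+1}{i-1}z^{d}+\binom{n}{i-1}z^{d-1}=\binom{n+2}{i}\Bigl[\tfrac{i}{n+2}\,z^{d}+\tfrac{d}{n+2}\,\E[W]\,z^{d-1}\Bigr],
\]
which is a convex combination since $\frac{i}{n+2}+\frac{d}{n+2}=1$. Now use the two pointwise bounds $\min(z,W)^d\le z^d$ and $\min(z,W)^d\le W z^{d-1}$. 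Averaging them with these same weights and taking expectations gives $\binom{n+2}{i}\E[\min(z,W)^d]\le A_{n,i}(1-z)$, which is exactly $\Phi_{n,i}(p)\ge 0$.
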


Applying Lemma~\ref{prop:bernstein-main} to~\eqref{eq:warmup-extension} gives $\int_0^r \psi_{n,i}(w; p) \, f_{i,n}(w) \, dw \ge \Phi_{n,i}(p) \ge 0$ in Case~(ii) as well, completing the proof of Lemma~\ref{lem:warmup-reduction}.
\end{proof}

\subsubsection{Proof of Lemma~\ref{prop:bernstein-main}}\label{sec:bernstein}

\begin{proof}[Proof of Lemma~\ref{prop:bernstein-main}]
Fix any integers $n \ge i \ge 1$ and any $p \in [0, 1]$, and let $z := 1 - p \in [0, 1]$ and $d := n + 2 - i \ge 2$. Let $W \sim \mathrm{Beta}(i, n - i + 1)$ be a random variable with density $f_{i,n}(w) = n \binom{n-1}{i-1} w^{i-1}(1 - w)^{n-i}$ on $(0, 1)$. By linearity of expectation and $\int_0^1 f_{i,n}(w) \, dw = 1$, $\Phi_{n,i}(p) = \int_0^1 \psi_{n,i}(w; p) \, f_{i,n}(w) \, dw$ can be written as
\[
    \Phi_{n,i}(p) = A_{n,i}(p) - \binom{n+2}{i} \E\bigl[\min(z, W)^d\bigr],
\]
where the expectation of $W \sim \mathrm{Beta}(i, n - i + 1)$ is $\E[W] = \frac{i}{n+1}$.
Substituting $1 - p = z$ and $2 - p = 1 + z$ (with $n - i + 1 = d - 1$ and $n - i + 2 = d$) into the definition of $A_{n,i}(p)$ in~\eqref{eq:A-def} and applying Pascal's identity $\binom{n}{i-1} + \binom{n}{i-2} = \binom{n+1}{i-1}$ gives
\[
    A_{n,i}(1 - z) = \binom{n}{i-1} z^{d-1} (1 + z) + \binom{n}{i-2} z^d = \binom{n+1}{i-1} z^d + \binom{n}{i-1} z^{d-1}.
\]
By straightforward algebra
\begin{align*}
    \binom{n+1}{i-1} &= \frac{i}{n + 2} \binom{n+2}{i}, \\
    \binom{n}{i-1} &= \frac{n + 2 - i}{n + 2} \cdot \frac{i}{n + 1} \binom{n+2}{i} = \frac{d}{n + 2} \, \E[W] \binom{n+2}{i},
\end{align*}
we can express $A_{n,i}(1 - z)$ as
\begin{equation}\label{eq:convex-comb-identity}
    A_{n,i}(1 - z) = \binom{n+2}{i} \left[ \frac{i}{n + 2} \, z^d \;+\; \frac{d}{n + 2} \, \E[W] \, z^{d-1} \right],
\end{equation}
where the positive weights $\frac{i}{n+2} > 0$ and $\frac{d}{n+2} = \frac{n+2-i}{n+2} > 0$ sum to $1$.

Now, since $d \ge 2$ implies $d - 1 \ge 1$, for any $z, w \ge 0$, we have $\min(z, w)^d\leq z^d$ and $\min(z,w)^d\leq w\cdot z^{d-1}$.
Taking expectations over $W \sim \mathrm{Beta}(i, n - i + 1)$ gives both $\E[\min(z, W)^d] \le z^d$ and $\E[\min(z, W)^d] \le \E[W] \, z^{d-1}$. Averaging these two inequalities with the weights $\frac{i}{n+2}$ and $\frac{d}{n+2}$ yields
\[
    \E\bigl[\min(z, W)^d\bigr] = \frac{i}{n + 2} \E\bigl[\min(z, W)^d\bigr] + \frac{d}{n + 2} \E\bigl[\min(z, W)^d\bigr] \le \frac{i}{n + 2} \, z^d + \frac{d}{n + 2} \, \E[W] \, z^{d-1}.
\]
Multiplying both sides by $\binom{n+2}{i}$ and comparing with~\eqref{eq:convex-comb-identity} gives $A_{n,i}(1 - z) \ge \binom{n+2}{i} \E[\min(z, W)^d]$, and therefore $\Phi_{n,i}(p) = A_{n,i}(1-z) - \binom{n+2}{i}\E[\min(z, W)^d] \ge 0$.
\end{proof}

\section{Generalization to \texorpdfstring{$F_B \FSD F_S$}{FB >= FSD FS} and \texorpdfstring{$m \ge n$}{m >= n}}\label{sec:generalization}

We now extend the analysis of Section~\ref{sec:warmup} to arbitrary distributions satisfying $F_B \FSD F_S$ and arbitrary market sizes $m \ge n$. Our proof of Theorem~\ref{thm:main} parallels the warm-up via two lemmas. First, as the analog of Lemma~\ref{lem:warmup-identity} for general distributions $F_B \FSD F_S$ and market sizes $m \ge n$, the following lemma lower-bounds the threshold integral $I_i(x)$ in terms of the truncated integrand $\psi_{n,i}(w; p)$ from~\eqref{eq:psi-def}.

\begin{lemma}[Threshold Integral Lower Bound under FSD]\label{lem:fsd-cdf-bound}
Suppose $m \ge n \ge 1$ and $F_B \FSD F_S$. Then for every rank $i \in \{1, \dots, n\}$ and every threshold $x \in \R$ with $p := F_S(x)$ and $q_x := F_B(x^-)$, the threshold integral from~\eqref{eq:I-i-def} satisfies
\begin{equation}\label{eq:general-integral-lower-bound}
    I_i(x) \ge p^i \int_0^{1-q_x} \psi_{n,i}(w; p) \, f_{i,m}(w) \, dw,
\end{equation}
where $\psi_{n,i}(w; p) = A_{n,i}(p) - \binom{n+2}{i} \min(1 - p, w)^{n+2-i}$ is the truncated integrand from~\eqref{eq:psi-def} and $f_{i,m}(w) := m \binom{m-1}{i-1} w^{i-1}(1 - w)^{m-i}$ is the $\mathrm{Beta}(i, m - i + 1)$ density on $(0, 1)$.
\end{lemma}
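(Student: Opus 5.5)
We need to show $I_i(x) \ge p^i \int_0^{1-q_x} \psi_{n,i}(w;p) f_{i,m}(w)\,dw$. The plan is to pass to buyer quantiles via the quantile coupling, use FSD to replace $F_S(b^{(i)})$ by something no larger than the buyer quantile, and then use monotonicity of $\psi$ in its argument.

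\textbf{Step 1 (coupling).} Represent $b_j = F_B^{-1}(U_j)$ with $U_j$ i.i.d.\ uniform on $(0,1)$. Since $F_B^{-1}$ is non-decreasing, $b^{(i)} = F_B^{-1}(U^{(i)})$, where $U^{(i)}$ is the $i$-th largest uniform. Set $W := 1-U^{(i)}$. Then $W$ is the $i$-th smallest of $m$ uniforms and has density $f_{i,m}$.

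\textbf{Step 2 (the two events).}
\begin{itemize}
\item \emph{Participation.} By the Galois property in Definition~\ref{def:setting}, $F_B^{-1}(u) < x$ iff $u \le F_B(z)$ for some $z<x$. Up to the null boundary event, this is iff $u \le F_B(x^-) = q_x$. Hence $\{b^{(i)} \ge x\} \supseteq \{U^{(i)} > q_x\} = \{W < 1-q_x\}$, and in fact the two sets agree almost surely.
\item \emph{Loss term.} Using $F_S \ge F_B$ and $F_B(F_B^{-1}(u)) \ge u$, we get $F_S(b^{(i)}) \ge F_B(b^{(i)}) \ge U^{(i)}$, so $1-F_S(b^{(i)}) \le W$.
\item On the participation event $b^{(i)} \ge x$ we also have $F_S(b^{(i)}) \ge F_S(x) = p$, so $1-F_S(b^{(i)}) \le 1-p$.
\end{itemize}
Together, these give $1-F_S(b^{(i)}) \le \min(1-p, W)$ whenever $b^{(i)} \ge x$.

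\textbf{Step 3 (monotonicity and assembly).} Since $t \mapsto t^{n+2-i}$ is non-decreasing on $[0,1]$, on the event $\{b^{(i)} \ge x\}$ the integrand of \eqref{eq:I-i-def} satisfies
\[
A_{n,i}(p) - \binom{n+2}{i}\bigl(1-F_S(b^{(i)})\bigr)^{n+2-i} \;\ge\; \psi_{n,i}(W;p).
\]
Therefore $I_i(x) \ge p^i\, \E\bigl[\psi_{n,i}(W;p)\,\I[b^{(i)}\ge x]\bigr]$. By Step 2 the indicator equals $\I[W < 1-q_x]$ almost surely, which gives exactly $p^i \int_0^{1-q_x} \psi_{n,i}(w;p) f_{i,m}(w)\,dw$.

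\textbf{Main obstacle.} The delicate step is the exact identification of the participation event with $\{W < 1-q_x\}$ when atoms are present. If only one-sided containment held, we could not drop the mismatch, because $\psi$ may be negative near $w = 1-p$.

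One might hope to handle the leftover region by the sign argument of Lemma~\ref{lem:warmup-reduction}, but that would only give the bound for a different endpoint, so we rely on the exact equality. This follows from the equivalence: $F_B^{-1}(u) \ge x$ iff $F_B(z) < u$ for all $z<x$. That in turn is equivalent to $F_B(x^-) < u$, or to $F_B(x^-) = u$ with the supremum not attained. The second case has probability zero since $U^{(i)}$ is continuous. So $\{b^{(i)} \ge x\}$ and $\{U^{(i)} > q_x\}$ agree up to a null set, and the argument closes.
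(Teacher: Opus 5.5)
Your proposal is correct and follows essentially the same route as the paper. Both arguments use the quantile coupling $b^{(i)} = F_B^{-1}(U^{(i)})$ and the pointwise bound $1 - F_S(F_B^{-1}(u)) \le \min(1-p,\,1-u)$, obtained from FSD, $F_B(F_B^{-1}(u)) \ge u$, and monotonicity of $F_S$ on the participation event. Both then use the almost-sure identity $\I[b^{(i)} \ge x] = \I[W < 1-q_x]$; the paper isolates this as Lemma~\ref{lem:quantile-order-stats}, while you prove it inline.
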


Second, as the analog of Lemma~\ref{lem:warmup-reduction} for $m \ge n$ buyers, the following lemma shows that the truncated integral in~\eqref{eq:general-integral-lower-bound} remains non-negative when $f_{i,n}$ is replaced by $f_{i,m}$.

\begin{lemma}[Non-Negativity of the Truncated Integral for $m \ge n$]\label{lem:mlrp-reduction}
Let $m \ge n \ge i \ge 1$ and $p \in [0, 1]$. Then for every $r \in [0, 1]$,
\begin{equation}\label{eq:general-integral-nonneg}
    \int_0^r \psi_{n,i}(w; p) \, f_{i,m}(w) \, dw \ge 0.
\end{equation}
\end{lemma}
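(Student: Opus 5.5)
We reduce to the case $m = n$, which is exactly Lemma~\ref{lem:warmup-reduction}, by a monotone likelihood ratio argument. The key structural facts are two. First, $w \mapsto \psi_{n,i}(w;p)$ is non-increasing on $[0,1]$, as observed in the proof of Lemma~\ref{lem:warmup-reduction}. Second, the likelihood ratio
\[
\frac{f_{i,m}(w)}{f_{i,n}(w)} = \frac{m\binom{m-1}{i-1}}{n\binom{n-1}{i-1}}\,(1-w)^{m-n}
\]
is a non-increasing function of $w$ on $[0,1)$ when $m \ge n$. So $f_{i,m}$ puts relatively more weight on small $w$, precisely where $\psi_{n,i}$ is largest.

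We follow the same case split as in the warm-up. Fix $r \in [0,1]$. If $\psi_{n,i}(r;p) \ge 0$, then $\psi_{n,i} \ge 0$ on $[0,r]$ by monotonicity, and the integral is trivially non-negative. Otherwise, $\psi_{n,i}(w;p) < 0$ on $[r,1]$, so extending the upper limit to $1$ only decreases the integral. It then suffices to show
\[
\int_0^1 \psi_{n,i}(w;p)\, f_{i,m}(w)\, dw \ge 0.
\]

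To prove this, we write $h(w) := f_{i,m}(w)/f_{i,n}(w)$, which is non-increasing, and set $\Psi(t) := \int_0^t \psi_{n,i}(w;p) f_{i,n}(w)\,dw$. Lemma~\ref{lem:warmup-reduction} gives $\Psi(t) \ge 0$ for every $t \in [0,1]$. Integrating by parts (second mean value theorem / Abel summation),
\[
\int_0^1 \psi_{n,i}\, h\, f_{i,n}\, dw = h(1^-)\,\Psi(1) + \int_0^1 \Psi(t)\, d\bigl(-h(t)\bigr) \ge 0,
\]
since $\Psi \ge 0$, $h(1^-) \ge 0$, and $-h$ is non-decreasing.

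The only delicate point is the endpoint behaviour of $h$. When $m > n$ we have $h(1) = 0$, and $h$ is bounded and continuous on $[0,1]$, so the Stieltjes integral is well defined and there is no singularity. If one prefers to avoid Stieltjes integrals, the same conclusion follows by writing $h(w) = \int_0^\infty \I[h(w) > \lambda]\, d\lambda$. Each superlevel set $\{w : h(w) > \lambda\}$ is an interval $[0,t_\lambda)$, so
\[
\int_0^1 \psi_{n,i}\, f_{i,m}\, dw = \int_0^\infty \Psi(t_\lambda)\, d\lambda \ge 0
\]
by Tonelli (splitting $\psi_{n,i}$ into positive and negative parts, both bounded). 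I expect this layer-cake step to be the main point requiring care, but it is routine given Lemma~\ref{lem:warmup-reduction} holds for every $r$, which is exactly why that lemma was stated for arbitrary upper limits.
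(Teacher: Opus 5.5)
Your proof is correct, but its key step differs from the paper's. The paper never reduces to $r=1$. For each fixed $r\in(0,1]$ it shows that the conditional mean of $\psi_{n,i}$ on $[0,r]$ can only increase when $f_{i,n}$ is replaced by $f_{i,m}$. It does this by symmetrizing into $\frac12\int_0^r\!\int_0^r \bigl(\psi_{n,i}(u;p)-\psi_{n,i}(v;p)\bigr)\bigl(f_{i,m}(u)f_{i,n}(v)-f_{i,m}(v)f_{i,n}(u)\bigr)\,du\,dv \ge 0$, which is a Chebyshev-type covariance inequality. It then invokes Lemma~\ref{lem:warmup-reduction} only at the single upper limit $r$. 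That route needs both monotonicities: of $\psi_{n,i}$ and of the likelihood ratio.

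Your Abel-summation step uses different ingredients. It needs Lemma~\ref{lem:warmup-reduction} at every upper limit $t$ (all prefix integrals $\Psi(t)\ge 0$), together with $h\ge 0$ non-increasing. It does not need monotonicity of $\psi_{n,i}$ at all. That monotonicity enters only through your case split, which you can drop: integrating by parts directly on $[0,r]$ already gives
\[
\int_0^r \psi_{n,i}(w;p)\,f_{i,m}(w)\,dw = h(r)\,\Psi(r) + \int_0^r \Psi(t)\,\bigl(-h'(t)\bigr)\,dt \ge 0 .
\]
Here $h(w)=C(1-w)^{m-n}$ is a polynomial, so $f_{i,m}=h\,f_{i,n}$ holds everywhere, including at $w=0$. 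Also $\Psi$ is $C^1$ because $\psi_{n,i}f_{i,n}$ is continuous. So this is ordinary integration by parts, and neither the Stieltjes formulation nor the layer-cake variant is needed.

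In short, you trade monotonicity of $\psi_{n,i}$ for the all-prefixes form of Lemma~\ref{lem:warmup-reduction}, which the paper proves anyway. The result is a slightly shorter argument that applies to any integrand whose prefix integrals are non-negative.
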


Combining Lemma~\ref{prop:layer-cake-main} with Lemmas~\ref{lem:fsd-cdf-bound} and~\ref{lem:mlrp-reduction} immediately establishes our main theorem.

\begin{proof}[Proof of Theorem~\ref{thm:main}]
Fix any market sizes $m \ge n \ge 1$ and distributions $F_B \FSD F_S$ with $\OPT(m, n) < \infty$. By Lemma~\ref{prop:layer-cake-main}, the expected surplus difference satisfies
\[
    \STR(m, n + 2) - \OPT(m, n) \ge \sum_{i=1}^n \int_{-\infty}^{\infty} I_i(x) \, dx.
\]
For every threshold $x \in \R$ and rank $i \in \{1, \dots, n\}$, applying Lemma~\ref{lem:fsd-cdf-bound} and Lemma~\ref{lem:mlrp-reduction} at $r = 1 - q_x \in [0, 1]$ yields
\[
    I_i(x) \ge p^i \int_0^{1-q_x} \psi_{n,i}(w; p) \, f_{i,m}(w) \, dw \ge 0.
\]
Integrating over $x \in (-\infty, \infty)$ and summing over $i = 1, \dots, n$ gives $\STR(m, n + 2) \ge \OPT(m, n)$.
\end{proof}

The next two subsections are devoted to proving Lemma~\ref{lem:fsd-cdf-bound} (Subsection~\ref{subsec:fsd-coupling}) and Lemma~\ref{lem:mlrp-reduction} (Subsection~\ref{subsec:reduction}).

\subsection{Proof of Lemma~\ref{lem:fsd-cdf-bound}}\label{subsec:fsd-coupling}

When $F_B$ is an arbitrary CDF (possibly with jump discontinuities or flat regions), we couple buyer valuations using the generalized inverse CDF $b(u) = \inf\{z \in \R : F_B(z) \ge u\}$ from Definition~\ref{def:setting}.

\begin{lemma}[Quantile Coupling and Active Buyer Event]\label{lem:quantile-order-stats}
Let $U_1, \dots, U_m \stackrel{\mathrm{i.i.d.}}{\sim} \mathrm{Uniform}(0, 1)$ with sorted order statistics $U^{(1)} \ge \dots \ge U^{(m)}$, and let $W_{i,m} := 1 - U^{(i)} \sim \mathrm{Beta}(i, m - i + 1)$ have density $f_{i,m}(w)$. Then:
\begin{enumerate}[label=(\roman*), leftmargin=2em, topsep=2pt, itemsep=2pt]
    \item The vector $\bigl(F_B^{-1}(U^{(1)}), \dots, F_B^{-1}(U^{(m)})\bigr)$ has the exact joint distribution of the sorted buyer valuations $\bigl(b^{(1)}, \dots, b^{(m)}\bigr)$.
    \item For any threshold $x \in \R$, let $q_x := F_B(x^-) \in [0, 1]$. Then for all $u \in (0, 1)$,
    \begin{equation}\label{eq:threshold-sandwich}
        u > q_x \implies F_B^{-1}(u) \ge x \implies u \ge q_x.
    \end{equation}
    Consequently, since $\Pr[U^{(i)} = q_x] = 0$, the active buyer indicator satisfies $\I[F_B^{-1}(U^{(i)}) \ge x] = \I[W_{i,m} < 1 - q_x]$ almost surely.
\end{enumerate}
\end{lemma}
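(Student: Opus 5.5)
The proof of Lemma~\ref{lem:quantile-order-stats} has two parts: a monotonicity argument for (i), and the Galois-type equivalence from Definition~\ref{def:setting} for (ii).

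\emph{Part (i).} I would use that $F_B^{-1}$ is non-decreasing on $(0,1)$, and that $F_B^{-1}(U)$ has CDF $F_B$ when $U \sim \mathrm{Uniform}(0,1)$. The latter follows from the equivalence $F_B^{-1}(u) \le z \iff u \le F_B(z)$ recorded in Definition~\ref{def:setting}, which gives $\Pr[F_B^{-1}(U) \le z] = \Pr[U \le F_B(z)] = F_B(z)$.

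Hence $(F_B^{-1}(U_1),\dots,F_B^{-1}(U_m))$ is an i.i.d.\ sample from $F_B$. Because $F_B^{-1}$ is monotone, applying it to the sorted uniforms $U^{(1)} \ge \dots \ge U^{(m)}$ gives a non-increasing vector. That vector is exactly the sorted sample: ties are harmless, since sorting only permutes equal values. So it has the joint law of $(b^{(1)},\dots,b^{(m)})$.

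The marginal claim $W_{i,m} := 1-U^{(i)} \sim \mathrm{Beta}(i,m-i+1)$ follows by symmetry. The variables $1-U_j$ are i.i.d.\ uniform, and $1-U^{(i)}$ is their $i$-th smallest order statistic, which has the standard Beta density $f_{i,m}$.

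\emph{Part (ii).} I would take the contrapositive of each implication and apply the same equivalence.

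For the first implication, suppose $F_B^{-1}(u) < x$. Pick $z$ with $F_B^{-1}(u) \le z < x$. The equivalence gives $u \le F_B(z)$, and since $z < x$ we have $F_B(z) \le F_B(x^-) = q_x$. Thus $u \le q_x$.

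For the second implication, suppose $u < q_x = \lim_{z\uparrow x} F_B(z)$. Then some $z < x$ satisfies $F_B(z) \ge u$. The equivalence gives $F_B^{-1}(u) \le z < x$.

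The indicator statement then follows from the sandwich \eqref{eq:threshold-sandwich}. The events $\{F_B^{-1}(U^{(i)}) \ge x\}$ and $\{U^{(i)} > q_x\}$ can differ only on $\{U^{(i)} = q_x\}$. This event has probability zero because $U^{(i)}$ has a density. Finally, $\{U^{(i)} > q_x\} = \{W_{i,m} < 1-q_x\}$.

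\emph{Main obstacle.} The only delicate point is the boundary behaviour at atoms and flat pieces of $F_B$: one must use $F_B(x^-)$ rather than $F_B(x)$. Handling this through the contrapositive, with an intermediate point $z < x$, avoids any continuity assumption. The cases $q_x \in \{0,1\}$ need no separate treatment, since the implications remain valid (vacuously where appropriate) for $u \in (0,1)$.
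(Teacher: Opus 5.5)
Your proof is correct and follows essentially the same route as the paper: inverse-transform sampling plus monotonicity of $F_B^{-1}$ for (i), and the same two one-sided arguments with an intermediate point $z < x$ for (ii). The only difference is cosmetic: you prove the first implication by contrapositive via the equivalence $F_B^{-1}(u) \le z \iff u \le F_B(z)$, whereas the paper argues it directly from the infimum definition. You also spell out the inverse-transform step and the Beta marginal, which the paper treats as standard.
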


\begin{proof}
By standard inverse-transform sampling, $F_B^{-1}(U_1), \dots, F_B^{-1}(U_m)$ are i.i.d.\ draws from $F_B$. Since $u \mapsto F_B^{-1}(u)$ is non-decreasing, applying $F_B^{-1}(\cdot)$ to $U^{(1)} \ge \dots \ge U^{(m)}$ preserves their order and yields the order statistics $b^{(1)} \ge \dots \ge b^{(m)}$, proving~(i). For~(ii), if $u > q_x = \lim_{z \uparrow x} F_B(z)$, then $F_B(z) < u$ for all $z < x$, so $\inf\{z : F_B(z) \ge u\} \ge x$, i.e., $F_B^{-1}(u) \ge x$. Conversely, if $u < q_x$, there exists $z < x$ with $F_B(z) \ge u$, so $F_B^{-1}(u) \le z < x$. Thus $\{u : F_B^{-1}(u) \ge x\}$ differs from $(q_x, 1)$ by at most the singleton $\{q_x\}$, which has probability zero under the continuous Beta distribution of $U^{(i)}$.
\end{proof}

\begin{proof}[Proof of Lemma~\ref{lem:fsd-cdf-bound}]
Fix any threshold $x \in \R$, and let $p := F_S(x)$ and $q_x := F_B(x^-)$. For any quantile $u \in (0, 1)$ satisfying $F_B^{-1}(u) \ge x$, right-continuity of $F_B$ and stochastic dominance give $F_S(F_B^{-1}(u)) \ge F_B(F_B^{-1}(u)) \ge u$, which implies $1 - F_S(F_B^{-1}(u)) \le 1 - u$. Furthermore, since $F_B^{-1}(u) \ge x$ and $F_S$ is non-decreasing, we have $F_S(F_B^{-1}(u)) \ge F_S(x) = p$, which gives $1 - F_S(F_B^{-1}(u)) \le 1 - p$. Combining these two inequalities yields
\begin{equation}\label{eq:fsd-survival-bound}
    1 - F_S\bigl(F_B^{-1}(u)\bigr) \le \min(1 - p, 1 - u).
\end{equation}
Since the exponent $n + 2 - i \ge 2$ is positive, raising both sides of~\eqref{eq:fsd-survival-bound} to the power $n + 2 - i$ gives $\bigl(1 - F_S(F_B^{-1}(u))\bigr)^{n+2-i} \le \min(1 - p, 1 - u)^{n+2-i}$. Substituting this bound along with $b^{(i)} = F_B^{-1}(U^{(i)})$ and $w = 1 - U^{(i)}$ into the threshold integral definition~\eqref{eq:I-i-def},
\[
    I_i(x) = p^i \, \E_{b^{(i)}}\left[ \left( A_{n,i}(p) - \binom{n+2}{i} \bigl(1 - F_S(b^{(i)})\bigr)^{n+2-i} \right) \I\bigl[b^{(i)} \ge x\bigr] \right],
\]
and applying Lemma~\ref{lem:quantile-order-stats} ($\I[F_B^{-1}(U^{(i)}) \ge x] = \I[W_{i,m} < 1 - q_x]$ a.s.) yields~\eqref{eq:general-integral-lower-bound}.
\end{proof}

\subsection{Proof of Lemma~\ref{lem:mlrp-reduction}}\label{subsec:reduction}

\begin{proof}[Proof of Lemma~\ref{lem:mlrp-reduction}]
The inequality holds trivially when $r = 0$. 
Now fix any $r \in (0, 1]$, and let $P_k(r) := \int_0^r f_{i,k}(w) \, dw > 0$ for $k \in \{m, n\}$. Since $\int_0^r \psi_{n,i}(w; p) \, f_{i,n}(w) \, dw \ge 0$ by Lemma~\ref{lem:warmup-reduction}, to establish~\eqref{eq:general-integral-nonneg} it suffices to prove the conditional expectation inequality
\begin{equation}\label{eq:mlrp-conditional}
    \frac{\int_0^r \psi_{n,i}(w; p) \, f_{i,m}(w) \, dw}{\int_0^r f_{i,m}(w) \, dw} \ge \frac{\int_0^r \psi_{n,i}(w; p) \, f_{i,n}(w) \, dw}{\int_0^r f_{i,n}(w) \, dw}.
\end{equation}
Clearing denominators in~\eqref{eq:mlrp-conditional} by multiplying both sides by $P_m(r) P_n(r) > 0$, we must show that
\begin{align*}
    \Delta &:= \int_0^r \int_0^r \psi_{n,i}(u; p) \, f_{i,m}(u) f_{i,n}(v) \, du \, dv - \int_0^r \int_0^r \psi_{n,i}(v; p) \, f_{i,m}(u) f_{i,n}(v) \, du \, dv \\
    &= \int_0^r \int_0^r \bigl(\psi_{n,i}(u; p) - \psi_{n,i}(v; p)\bigr) f_{i,m}(u) f_{i,n}(v) \, du \, dv \ge 0.
\end{align*}
Since the domain of integration $[0, r] \times [0, r]$ is symmetric in $u$ and $v$, relabeling the dummy variables $(u, v) \leftrightarrow (v, u)$ gives an equivalent expression for $\Delta$:
\begin{align*}
    \Delta &= \int_0^r \int_0^r \bigl(\psi_{n,i}(v; p) - \psi_{n,i}(u; p)\bigr) f_{i,m}(v) f_{i,n}(u) \, du \, dv \\
    &= -\int_0^r \int_0^r \bigl(\psi_{n,i}(u; p) - \psi_{n,i}(v; p)\bigr) f_{i,m}(v) f_{i,n}(u) \, du \, dv.
\end{align*}
Averaging these two expressions for $\Delta$ (i.e., taking $\Delta = \frac{1}{2}(\Delta + \Delta)$) and factoring out $\bigl(\psi_{n,i}(u; p) - \psi_{n,i}(v; p)\bigr)$ yields
\begin{equation}\label{eq:double-integral-sym}
    \Delta = \frac{1}{2} \int_0^r \int_0^r \bigl(\psi_{n,i}(u; p) - \psi_{n,i}(v; p)\bigr) \bigl(f_{i,m}(u) f_{i,n}(v) - f_{i,m}(v) f_{i,n}(u)\bigr) \, du \, dv.
\end{equation}
We now check that the two factors in the integrand of~\eqref{eq:double-integral-sym} always have the same sign:
\begin{itemize}[leftmargin=1.5em, topsep=2pt, itemsep=2pt]
    \item If $u \le v$, then since $w \mapsto \psi_{n,i}(w; p) = A_{n,i}(p) - \binom{n+2}{i} \min(1 - p, w)^{n+2-i}$ is non-increasing, the first factor satisfies $\psi_{n,i}(u; p) - \psi_{n,i}(v; p) \ge 0$. Moreover, since $f_{i,m}(w) = C \cdot (1 - w)^{m-n} f_{i,n}(w)$ with $C = m\binom{m-1}{i-1} / (n\binom{n-1}{i-1}) > 0$ and $m \ge n$, we have $(1 - u)^{m-n} \ge (1 - v)^{m-n}$, so the second factor also satisfies
    \[
        f_{i,m}(u) f_{i,n}(v) - f_{i,m}(v) f_{i,n}(u) = C \, f_{i,n}(u) f_{i,n}(v) \bigl[(1 - u)^{m-n} - (1 - v)^{m-n}\bigr] \ge 0.
    \]
    \item If $u \ge v$, the same monotonicity relations give $\psi_{n,i}(u; p) - \psi_{n,i}(v; p) \le 0$ and $f_{i,m}(u) f_{i,n}(v) - f_{i,m}(v) f_{i,n}(u) \le 0$.
\end{itemize}
In both cases, the product of the two factors is non-negative for all $(u, v) \in (0, r)^2$, and therefore $\Delta \ge 0$.
\end{proof}

\section{Tightness: Impossibility of Recruiting One Seller}\label{sec:lower-bound}

In this section, we prove Theorem~\ref{thm:lower-bound}, showing that when $m = n = 1$, recruiting only one additional seller (yielding a market with $m = 1$ buyer and $k = 2$ sellers) cannot match $\OPT(1, 1)$ under $F_B \FSD F_S$ for any prior-free DSIC, IR, and WBB mechanism $M$ (deterministic or randomized). Our proof builds closely on the lower bound of Babaioff, Goldner, and Gonczarowski~\cite[Theorem~4.3]{BabaioffGG20}, who proved this impossibility for \emph{deterministic, anonymous} prior-independent mechanisms. Their argument combines two ingredients: (i) a discrete two-point distribution on which Trade Reduction fails to match $\OPT(1, 1)$~\cite[Proposition~4.1]{BabaioffGG20}, and (ii) a profile-by-profile reduction~\cite[Lemma~4.2 and Lemma~D.1]{BabaioffGG20} showing that if a deterministic anonymous mechanism $M$ ever trades on a ``mixed'' profile where one seller has low cost and the other has high cost ($b_1 > s_1$ while the second seller's cost exceeds $b_1$), then determinism ($x_i \in \{0, 1\}$) and DSIC monotonicity force $M$ to have zero trade when the buyer's valuation drops slightly below $s_1$, which allows constructing a second two-point distribution where $M$ achieves an arbitrarily small fraction of $\OPT(1, 1)$.

Our construction directly inherits the core economic intuition and two-point seller structure of~\cite{BabaioffGG20}: we likewise take a two-point seller distribution $F_S^{(c)}$ supported on $\{c, 1\}$, exploiting the fact that on the mixed profile $(s_1, s_2) = (c, 1)$, the high-cost seller provides no competitive pressure below $1$.
To extend their lower bound to all randomized (and possibly non-anonymous) mechanisms, we replace the buyer's lower point mass with a continuous uniform density on $[c, 1]$ and consider the one-parameter family of distributions $(F_B^{(c)}, F_S^{(c)})_{c \in [0, 1)}$. We show that any mechanism matching $\mathrm{OPT}(1, 1)$ across this family would be forced by incentive compatibility and weak budget balance to trade with probability strictly greater than $1$ on certain mixed profiles (when seller costs are not identical), yielding the desired contradiction.

We restate Theorem~\ref{thm:lower-bound} below for convenience.

\begin{reptheorem}{thm:lower-bound}[Tightness: Impossibility of One-Seller Recruitment]
Let $M$ be any prior-free mechanism (deterministic or randomized) for $m = 1$ buyer and $k = 2$ sellers that is dominant-strategy incentive-compatible (DSIC), individually rational (IR), and weakly budget-balanced (WBB). Then there exist distributions $F_B \FSD F_S$ on $[0, 1]$ with $\OPT(1, 1) > 0$ such that
\begin{equation*}
    \mathrm{GFT}_M(1, 2) < \OPT(1, 1). \tag{\ref{eq:impossibility-m1-n1}}
\end{equation*}
\end{reptheorem}

Let $M = (x_B, x_1, x_2, p_B, p_1, p_2)$ be any prior-free DSIC, IR, and WBB mechanism operating in a market with $m = 1$ unit-demand buyer and $k = 2$ unit-supply sellers with valuations and costs in $[0, 1]$, where $x_B(b, s_1, s_2), x_1(b, s_1, s_2), x_2(b, s_1, s_2) \in [0, 1]$ denote the expected allocation probabilities and $p_B(b, s_1, s_2), p_1(b, s_1, s_2), p_2(b, s_1, s_2) \in \R$ denote the expected payments at profile $(b, s_1, s_2) \in [0, 1]^3$.

\paragraph{Reduction to Seller-Symmetric Mechanisms.}
We first note that without loss of generality, $M$ may be assumed symmetric in the two sellers:
\[
    x_B(b, s_1, s_2) = x_B(b, s_2, s_1) \qquad \text{and} \qquad x_1(b, s_1, s_2) = x_2(b, s_2, s_1) \qquad \forall (b, s_1, s_2) \in [0, 1]^3.
\]
Indeed, given any DSIC, IR, and WBB mechanism $M$, consider the symmetrized mechanism $\bar{M}$ that flips an independent fair coin before running $M$: with probability $1/2$, $\bar{M}$ executes $M$ on $(b, s_1, s_2)$; with probability $1/2$, $\bar{M}$ swaps the two sellers' roles, executing $M$ on $(b, s_2, s_1)$ and assigning $M$'s Seller~1 outcome to Seller~2 and $M$'s Seller~2 outcome to Seller~1. Its allocation rules are
\begin{align*}
    \bar{x}_B(b, s_1, s_2) &:= \frac{1}{2}\bigl(x_B(b, s_1, s_2) + x_B(b, s_2, s_1)\bigr), \\
    \bar{x}_1(b, s_1, s_2) &:= \frac{1}{2}\bigl(x_1(b, s_1, s_2) + x_2(b, s_2, s_1)\bigr) = \bar{x}_2(b, s_2, s_1),
\end{align*}
with analogous expected payments $\bar{p}_B, \bar{p}_1, \bar{p}_2$. Since truthful reporting, individual rationality, and weak budget balance hold under both outcomes of the coin flip, $\bar{M}$ is also DSIC, IR, and WBB. Moreover, its realized surplus is $\mathrm{gft}_{\bar{M}}(b, s_1, s_2) = \frac{1}{2}\bigl(\mathrm{gft}_M(b, s_1, s_2) + \mathrm{gft}_M(b, s_2, s_1)\bigr)$; since $(s_1, s_2) \sim F_S^2$ are i.i.d.\ and therefore exchangeable ($(s_1, s_2) \stackrel{d}{=} (s_2, s_1)$), $\E[\mathrm{gft}_M(b, s_2, s_1)] = \E[\mathrm{gft}_M(b, s_1, s_2)]$ and hence $\mathrm{GFT}_{\bar{M}}(1, 2; F_B, F_S) = \mathrm{GFT}_M(1, 2; F_B, F_S)$ for all $(F_B, F_S)$. Replacing $M$ by $\bar{M}$, we henceforth assume $M$ is seller-symmetric.

Because the buyer can receive an item only if at least one seller sells an item, the allocation rules must satisfy
\begin{equation}\label{eq:feasibility}
    0 \le x_B(b, s_1, s_2) \le x_1(b, s_1, s_2) + x_2(b, s_1, s_2) \quad \text{and} \quad x_B, x_1, x_2 \in [0, 1] \qquad \forall (b, s_1, s_2) \in [0, 1]^3,
\end{equation}
and the realized expected Gains From Trade at profile $(b, s_1, s_2)$ is
\begin{equation}\label{eq:gft-profile}
    \mathrm{gft}_M(b, s_1, s_2) := b \, x_B(b, s_1, s_2) - s_1 \, x_1(b, s_1, s_2) - s_2 \, x_2(b, s_1, s_2).
\end{equation}
Let $u_B := b \, x_B - p_B$ and $u_j := p_j - s_j \, x_j$ ($j \in \{1, 2\}$) denote the agents' expected utilities at $(b, s_1, s_2) \in [0, 1]^3$. By IR, $u_B, u_1, u_2 \ge 0$; by WBB, $p_B - p_1 - p_2 \ge 0$. Summing the three utilities yields
\begin{equation}\label{eq:sim-rent-identity}
    u_B + u_1 + u_2 = \mathrm{gft}_M(b, s_1, s_2) - \bigl(p_B - p_1 - p_2\bigr) \le \mathrm{gft}_M(b, s_1, s_2),
\end{equation}
which states that under weak budget balance, the sum of the agents' utilities cannot exceed the realized Gains From Trade.

For any parameter $p \in (0, 1)$ and cost threshold $c \in [0, 1)$, define the distribution pair $(F_B^{(c)}, F_S^{(c)})$ supported on $[c, 1]$ by:
\begin{itemize}[leftmargin=1.5em, topsep=2pt, itemsep=2pt]
    \item \textbf{Seller distribution $F_S^{(c)}$:} Two-point distribution placing probability mass $p$ at cost $c$ and mass $1 - p$ at cost $1$.
    \item \textbf{Buyer distribution $F_B^{(c)}$:} Places probability mass $p$ uniformly on $[c, 1]$ (with constant density $\frac{p}{1 - c}$) and mass $1 - p$ as a point mass at valuation $1$.
\end{itemize}
Since $F_B^{(c)}(z) = p \frac{z - c}{1 - c} \le p = F_S^{(c)}(z)$ for all $z \in [c, 1)$ and $F_B^{(c)}(1) = 1 = F_S^{(c)}(1)$, we have $F_B^{(c)} \FSD F_S^{(c)}$ for all $c \in [0, 1)$ and $p \in (0, 1)$.

\begin{lemma}[First-Best GFT on $(F_B^{(c)}, F_S^{(c)})$]\label{lem:sim-opt-bound}
For any $p \in (0, 1)$ and $c \in [0, 1)$,
\begin{equation}\label{eq:sim-opt-exact}
    \OPT(1, 1; F_B^{(c)}, F_S^{(c)}) = p\left(1 - \frac{p}{2}\right)(1 - c).
\end{equation}
\end{lemma}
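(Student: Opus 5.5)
Since $m = n = 1$, the first-best GFT is simply $\E[(b - s)^+]$ with $b \sim F_B^{(c)}$ and $s \sim F_S^{(c)}$ independent. We condition on the seller's cost, which takes only two values.

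\textbf{Seller cost $s = 1$ (probability $1-p$).} Every buyer valuation lies in $[c, 1]$, so $b \le 1$ almost surely and $(b - 1)^+ = 0$. This case contributes nothing.

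\textbf{Seller cost $s = c$ (probability $p$).} Here $(b - c)^+ = b - c$, since $b \ge c$ almost surely. We split on the buyer:
\begin{itemize}[leftmargin=1.5em, topsep=2pt, itemsep=2pt]
    \item With probability $1 - p$ the buyer sits at the atom $b = 1$, contributing $(1 - p)(1 - c)$.
    \item With probability $p$ the buyer is uniform on $[c, 1]$, whose conditional mean excess over $c$ is $\frac{1 - c}{2}$. This contributes $p \cdot \frac{1 - c}{2}$.
\end{itemize}
The conditional expectation given $s = c$ is therefore
\[
    (1 - c)\left[(1 - p) + \frac{p}{2}\right] = (1 - c)\left(1 - \frac{p}{2}\right).
\]

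Multiplying by the probability $p$ of this case gives
\[
    \OPT(1, 1; F_B^{(c)}, F_S^{(c)}) = p\left(1 - \frac{p}{2}\right)(1 - c),
\]
which is~\eqref{eq:sim-opt-exact}. Positivity for $p \in (0, 1)$ and $c < 1$ is immediate from this formula.

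The only point needing care is the boundary: ties at $b = s$ contribute zero surplus, so whether ties count as trades does not affect the value. No step here poses a real obstacle.
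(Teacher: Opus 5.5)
Your proof is correct and follows the paper's argument exactly. Seller cost $1$ contributes nothing. For seller cost $c$, you split the buyer into the atom at $1$ and the uniform part on $[c,1]$, which gives $p\bigl[(1-p)(1-c) + p\,\tfrac{1-c}{2}\bigr] = p\left(1-\tfrac{p}{2}\right)(1-c)$.
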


\begin{proof}
In $\OPT(1, 1; F_B^{(c)}, F_S^{(c)})$, positive GFT occurs only when the single seller has cost $c$ (probability $p$). Conditional on seller cost $c$, the buyer has valuation $1$ with probability $1 - p$ (surplus $1 - c$) and valuation uniformly distributed on $[c, 1]$ with probability $p$ (expected surplus $\frac{1 - c}{2}$), yielding
\[
    \OPT(1, 1; F_B^{(c)}, F_S^{(c)}) = p\left[(1 - p)(1 - c) + p \, \frac{1 - c}{2}\right] = p\left(1 - \frac{p}{2}\right)(1 - c). \qedhere
\]
\end{proof}

Next, we bound the expected GFT of any seller-symmetric DSIC, IR, and WBB mechanism $M$ in terms of
\[
    q(c) := x_1(1, c, 1) = x_2(1, 1, c) \in [0, 1],
\]
the probability that the low-cost seller trades when the buyer reports valuation $1$, one seller reports cost $c$, and the other seller reports cost $1$.

\begin{lemma}[Expected Surplus Upper Bound for $M$]\label{lem:sim-gft-bound}
Let $M$ be any seller-symmetric DSIC, IR, and WBB mechanism satisfying~\eqref{eq:feasibility}. For any $p \in (0, 1)$ and $c \in [0, 1)$,
\begin{equation}\label{eq:sim-gft-upper}
    \mathrm{GFT}_M(1, 2; F_B^{(c)}, F_S^{(c)}) \le p^2\left(1 - \frac{p}{2}\right)(1 - c) + 2p(1 - p)\left[(1 - c) q(c) - p \int_c^1 q(t) \, dt\right].
\end{equation}
\end{lemma}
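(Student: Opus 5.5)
The plan is to condition on the realized pair of seller costs, which lies in $\{c,1\}^2$, and bound the profile-wise surplus $\mathrm{gft}_M$ from~\eqref{eq:gft-profile} separately in each case. The tools are the feasibility constraint~\eqref{eq:feasibility}, the rent identity~\eqref{eq:sim-rent-identity}, and the Myerson envelope characterization of DSIC utilities on the buyer side and on the low-cost seller's side. The elementary fact used throughout is this: since $x_B \le x_1 + x_2$, for fixed $x_B$ the cost term $s_1x_1+s_2x_2$ is smallest when all trade is placed on the cheaper seller. Hence, at any profile with $b\le 1$ and costs $(c,1)$, we have $\mathrm{gft}_M(b,c,1)\le (b-c)\,x_B(b,c,1)$. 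In particular, at $b=1$ we get $\mathrm{gft}_M(1,c,1)\le x_B-cx_1-x_2\le (1-c)x_1 = (1-c)q(c)$.

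\emph{Easy cases.} The costs $(1,1)$ occur with probability $(1-p)^2$. There $\mathrm{gft}_M \le (b-1)x_B\le 0$, so this case contributes nothing. The costs $(c,c)$ occur with probability $p^2$. There $\mathrm{gft}_M\le (b-c)x_B\le b-c$, and $\E[b-c] = (1-p)(1-c)+p\frac{1-c}{2} = (1-\frac p2)(1-c)$. This gives the first term of~\eqref{eq:sim-gft-upper}.

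\emph{Mixed profiles.} The costs $(c,1)$ and $(1,c)$ each occur with probability $p(1-p)$, and by seller symmetry they contribute equally. It therefore suffices to show
\[
\E_b\bigl[\mathrm{gft}_M(b,c,1)\bigr]\le (1-c)q(c)-p\int_c^1 q(t)\,dt .
\]
The buyer has value $1$ with probability $1-p$. In that case the bound $(1-c)q(c)$ from above applies. The remaining mass $p$ is uniform on $[c,1]$, so it remains to show
\[
\int_c^1 \mathrm{gft}_M(b,c,1)\,db \le (1-c)\Bigl[(1-c)q(c)-\int_c^1 q(t)\,dt\Bigr].
\]
Since $\mathrm{gft}_M(b,c,1)\le (b-c)x_B(b,c,1)\le(1-c)x_B(b,c,1)$, this reduces to the key inequality
\[
\int_c^1 x_B(b,c,1)\,db+\int_c^1 q(t)\,dt\le (1-c)q(c).
\]

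\emph{Key step: envelope plus budget balance at the profile $(1,c,1)$.} This is the main point. First consider the buyer. By DSIC, $x_B(\cdot,c,1)$ is monotone, and the envelope formula gives $u_B(1,c,1)=u_B(c,c,1)+\int_c^1 x_B(b,c,1)\,db$. IR gives $u_B(c,c,1)\ge0$, so $u_B(1,c,1)\ge\int_c^1 x_B(b,c,1)\,db$. Next consider seller~1 as a function of its own cost $t$, with the others fixed at $(1,\cdot,1)$. The envelope formula gives $u_1(1,c,1)=u_1(1,1,1)+\int_c^1 x_1(1,t,1)\,dt$, so IR yields $u_1(1,c,1)\ge\int_c^1 q(t)\,dt$. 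For seller~2, IR gives $u_2\ge0$. Summing these bounds and applying~\eqref{eq:sim-rent-identity} gives
\[
\int_c^1 x_B\,db+\int_c^1 q\le u_B+u_1+u_2\le \mathrm{gft}_M(1,c,1)\le(1-c)q(c).
\]
This is exactly the key inequality. Multiplying through by the weights and summing the four cases yields~\eqref{eq:sim-gft-upper}.

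The only delicate point is justifying the envelope identities for a general randomized DSIC mechanism with interim (expected) allocations. I would do this by the standard argument: derive the two-sided inequalities $x(t')(t'-t)\le u(t')-u(t)\le x(t)(t'-t)$ from DSIC, conclude that $u$ is Lipschitz, and integrate. This works on $[0,1]$ without any continuity assumptions on $x$.
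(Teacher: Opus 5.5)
Your proposal is correct and is essentially the paper's proof. It uses the same case split on seller costs, the same bounds $\mathrm{gft}_M(1,c,1)\le(1-c)q(c)$ and $\mathrm{gft}_M(t,c,1)\le(1-c)x_B(t,c,1)$, and the same key step combining the buyer's and the low-cost seller's envelope identities with IR and~\eqref{eq:sim-rent-identity} at $(1,c,1)$. One small slip: for the buyer with $t'>t$, DSIC gives $x(t)(t'-t)\le u(t')-u(t)\le x(t')(t'-t)$, so you swapped $x(t)$ and $x(t')$, though this does not affect your Lipschitz-and-integrate argument.
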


\begin{proof}
Depending on the realization of $(s_1, s_2) \sim (F_S^{(c)})^2$, there are three cases.

\paragraph{Case 1: Both sellers have cost $1$ ($s_1 = s_2 = 1$, probability $(1 - p)^2$).}
Since the buyer's valuation is $b \le 1$ and both sellers have cost $1$, we trivially have $\mathrm{gft}_M(b, 1, 1) \le 0$.

\paragraph{Case 2: Both sellers have cost $c$ ($s_1 = s_2 = c$, probability $p^2$).}
For any buyer valuation $b \in [c, 1]$, since $c \ge 0$ and $x_B(b, c, c) \le x_1(b, c, c) + x_2(b, c, c)$ by~\eqref{eq:feasibility}, we have
\[
    \mathrm{gft}_M(b, c, c) = b \, x_B(b, c, c) - c\bigl(x_1(b, c, c) + x_2(b, c, c)\bigr) \le (b - c) x_B(b, c, c) \le b - c,
\]
where the last step uses $b - c \ge 0$ and $x_B(b, c, c) \le 1$. Under $b \sim F_B^{(c)}$, the buyer has valuation $1$ with probability $1 - p$ and valuation uniformly distributed on $[c, 1]$ with probability $p$, so just as in the proof of Lemma~\ref{lem:sim-opt-bound},
\[
    \E_{b \sim F_B^{(c)}}\bigl[\mathrm{gft}_M(b, c, c)\bigr] \le \E_{b \sim F_B^{(c)}}[b - c] = (1 - p)(1 - c) + p \, \frac{1 - c}{2} = \left(1 - \frac{p}{2}\right)(1 - c).
\]

\paragraph{Case 3: One seller has cost $c$ and one has cost $1$ ($(c, 1)$ or $(1, c)$, total probability $2p(1 - p)$).}
By seller symmetry, $\E_b[\mathrm{gft}_M(b, 1, c)] = \E_b[\mathrm{gft}_M(b, c, 1)]$. Fix $(s_1, s_2) = (c, 1)$. For any buyer valuation $t \in [c, 1]$, since $t \le 1$, $1 \ge c \ge 0$, and $x_1(t, c, 1) + x_2(t, c, 1) \ge x_B(t, c, 1)$ by~\eqref{eq:feasibility}, the realized surplus at $(t, c, 1)$ satisfies
\begin{align}
    \mathrm{gft}_M(t, c, 1) &= t \, x_B(t, c, 1) - c \, x_1(t, c, 1) - x_2(t, c, 1) \nonumber \\
    &\le x_B(t, c, 1) - c\bigl(x_1(t, c, 1) + x_2(t, c, 1)\bigr) \nonumber \\
    &\le x_B(t, c, 1) - c \, x_B(t, c, 1) = (1 - c) x_B(t, c, 1). \label{eq:sim-gft-at-t}
\end{align}
At $t = 1$, substituting $x_B(1, c, 1) \le x_1(1, c, 1) + x_2(1, c, 1)$ from~\eqref{eq:feasibility} cancels $x_2(1, c, 1)$ and gives
\begin{align}
    \mathrm{gft}_M(1, c, 1) &= x_B(1, c, 1) - c \, x_1(1, c, 1) - x_2(1, c, 1) \nonumber \\
    &\le \bigl(x_1(1, c, 1) + x_2(1, c, 1)\bigr) - c \, x_1(1, c, 1) - x_2(1, c, 1) \nonumber \\
    &= (1 - c) x_1(1, c, 1) = (1 - c) q(c). \label{eq:sim-gft-at-1}
\end{align}
Decomposing $b \sim F_B^{(c)}$ into the point mass $1 - p$ at $b = 1$ and the uniform density $\frac{p}{1 - c}$ on $[c, 1]$ and applying~\eqref{eq:sim-gft-at-t}--\eqref{eq:sim-gft-at-1} gives
\begin{align}
    \E_{b \sim F_B^{(c)}}\bigl[\mathrm{gft}_M(b, c, 1)\bigr] &= (1 - p)\,\mathrm{gft}_M(1, c, 1) + \frac{p}{1 - c} \int_c^1 \mathrm{gft}_M(t, c, 1) \, dt \nonumber \\
    &\le (1 - p)(1 - c) q(c) + p \int_c^1 x_B(t, c, 1) \, dt. \label{eq:sim-mixed-decomp}
\end{align}
To bound $\int_c^1 x_B(t, c, 1) \, dt$, we apply Myerson's payment identity and weak budget balance at $(1, c, 1)$. By DSIC, $t \mapsto x_B(t, c, 1)$ is non-decreasing and $t \mapsto q(t) = x_1(1, t, 1)$ is non-increasing on $[0, 1]$ (hence both are integrable on $[c, 1]$). By the payment identity and IR ($u_B(c, c, 1) \ge 0$ and $u_1(1, 1, 1) \ge 0$),
\begin{align*}
    u_B(1, c, 1) &= u_B(c, c, 1) + \int_c^1 x_B(t, c, 1) \, dt \ge \int_c^1 x_B(t, c, 1) \, dt, \\
    u_1(1, c, 1) &= u_1(1, 1, 1) + \int_c^1 q(t) \, dt \ge \int_c^1 q(t) \, dt.
\end{align*}
Substituting these two lower bounds along with $u_2(1, c, 1) \ge 0$ and~\eqref{eq:sim-gft-at-1} into~\eqref{eq:sim-rent-identity} ($u_B + u_1 + u_2 \le \mathrm{gft}_M$) at $(1, c, 1)$ gives
\[
    \int_c^1 x_B(t, c, 1) \, dt + \int_c^1 q(t) \, dt \le u_B(1, c, 1) + u_1(1, c, 1) \le \mathrm{gft}_M(1, c, 1) \le (1 - c) q(c),
\]
and hence $\int_c^1 x_B(t, c, 1) \, dt \le (1 - c) q(c) - \int_c^1 q(t) \, dt$. Substituting this bound into~\eqref{eq:sim-mixed-decomp} yields $\E_b[\mathrm{gft}_M(b, c, 1)] \le (1 - c) q(c) - p \int_c^1 q(t) \, dt$.

\medskip
\noindent Summing the three cases weighted by $(1 - p)^2$, $p^2$, and $2p(1 - p)$ establishes~\eqref{eq:sim-gft-upper}.
\end{proof}

\begin{proof}[Proof of Theorem~\ref{thm:lower-bound}]
Fix $p = 4/5 = 0.8$ (or any $p \in (2/3, 1)$). Suppose for contradiction that $\mathrm{GFT}_M(1, 2; F_B^{(c)}, F_S^{(c)}) \ge \OPT(1, 1; F_B^{(c)}, F_S^{(c)})$ for every $c \in [0, 1)$. Substituting~\eqref{eq:sim-opt-exact} from Lemma~\ref{lem:sim-opt-bound} and~\eqref{eq:sim-gft-upper} from Lemma~\ref{lem:sim-gft-bound} gives
\[
    p\left(1 - \frac{p}{2}\right)(1 - c) \le p^2\left(1 - \frac{p}{2}\right)(1 - c) + 2p(1 - p)\left[(1 - c) q(c) - p \int_c^1 q(t) \, dt\right] \qquad \forall c \in [0, 1).
\]
Subtracting $p^2\bigl(1 - \frac{p}{2}\bigr)(1 - c)$ from both sides and factoring $p - p^2 = p(1 - p)$ on the left-hand side yields
\[
    p(1 - p)\left(1 - \frac{p}{2}\right)(1 - c) \le 2p(1 - p)\left[(1 - c) q(c) - p \int_c^1 q(t) \, dt\right].
\]
Dividing both sides by $2p(1 - p)(1 - c) > 0$ gives
\begin{equation}\label{eq:sim-q-ineq}
    q(c) - \frac{p}{1 - c} \int_c^1 q(t) \, dt \ge \frac{1}{2}\left(1 - \frac{p}{2}\right) \qquad \forall c \in [0, 1).
\end{equation}
Now let $m := \inf_{c \in [0, 1)} q(c) \in [0, 1]$. Since $q(t) \ge m$ for all $t \in [c, 1)$ (and the single point $t = 1$ has Lebesgue measure zero), we have
\[
    \int_c^1 q(t) \, dt \ge \int_c^1 m \, dt = m(1 - c) \qquad \implies \qquad \frac{p}{1 - c} \int_c^1 q(t) \, dt \ge p m.
\]
Rearranging~\eqref{eq:sim-q-ineq} and substituting this lower bound implies that for every $c \in [0, 1)$,
\[
    q(c) \ge \frac{1}{2}\left(1 - \frac{p}{2}\right) + \frac{p}{1 - c} \int_c^1 q(t) \, dt \ge \frac{1}{2}\left(1 - \frac{p}{2}\right) + p m.
\]
Since the right-hand side is independent of $c$, taking the infimum over $c \in [0, 1)$ on the left-hand side yields
\[
    m = \inf_{c \in [0, 1)} q(c) \ge \frac{1}{2}\left(1 - \frac{p}{2}\right) + p m \quad \iff \quad (1 - p) m \ge \frac{1 - p/2}{2} \quad \iff \quad m \ge \frac{1 - p/2}{2(1 - p)}.
\]
Evaluating at $p = 0.8$ gives $m \ge \frac{1 - 0.4}{2(0.2)} = \frac{0.6}{0.4} = 1.5 > 1$, contradicting that $q(c) \in [0, 1]$ is a probability. Therefore, there exists $c \in [0, 1)$ such that $\mathrm{GFT}_M(1, 2; F_B^{(c)}, F_S^{(c)}) < \OPT(1, 1; F_B^{(c)}, F_S^{(c)})$.
\end{proof}

\section{Conclusion}\label{sec:conclusion}

We determined the exact competition complexity of scarce-side recruitment in two-sided markets. When $m \ge n$ and $F_B \FSD F_S$, two additional sellers suffice uniformly over all $m \ge n \ge 1$ (Theorem~\ref{thm:main}, $\STR(m, n + 2) \ge \OPT(m, n)$), and this uniform bound is optimal: already for bilateral trade, one additional seller does not suffice for any prior-free DSIC, IR, and WBB mechanism (Theorem~\ref{thm:lower-bound}). This resolves open questions of \citet[Section~4.3]{BabaioffGG20} and \citet[Section~5]{CaiLMZ24}. Together with the $\Omega(\log m)$ abundant-side lower bound of~\cite{BabaioffGG20}, this shows that \emph{which} side a market designer recruits on matters far more than \emph{how many} traders are recruited: two traders on the scarce side dominate $\Omega(\log m)$ traders on the abundant side.

\bibliographystyle{plainnat}
\bibliography{references}

\end{document}